\documentclass[11pt]{article}
\usepackage[margin=1in]{geometry}
\usepackage{amsmath, amssymb, amsthm}
\usepackage{physics}
\usepackage{tikz}
\usepackage{quantikz} 

\usepackage{graphicx}
\usepackage{hyperref}
\usepackage{cleveref}
\hypersetup{
    colorlinks=true,
    linkcolor=blue,
    citecolor=blue,
    urlcolor=blue
}

\usepackage{booktabs} 

\usepackage{natbib}
\usepackage{authblk} 

\newcommand{\aqa}{$\langle aQa^L \rangle$ Applied Quantum Algorithms, Leiden University, The Netherlands}
\newcommand{\liacs}{LIACS, Leiden University, Niels Bohrweg 1, 2333 CA, Leiden, The Netherlands}
\newcommand{\bmw}{BMW Group, 80788 München, Germany}

\newtheorem{lemma}{Lemma}
\newtheorem{theorem}{Theorem}

\newtheorem{definition}{Definition}
\newtheorem{remark}{Remark}
\newtheorem{example}{Example}

\title{Universality and kernel-adaptive training for classically trained, quantum-deployed generative models}

\author[1,2,3]{Andrii Kurkin}
\author[1,2,3]{Kevin Shen}
\author[3]{Susanne Pielawa}
\author[1,2]{Hao Wang}
\author[1,2]{Vedran Dunjko}

\affil[1]{\aqa}
\affil[2]{\liacs}
\affil[3]{\bmw}

\date{} 

\begin{document}
\maketitle

\begin{abstract}
    The instantaneous quantum polynomial (IQP) quantum circuit Born machine (QCBM) has been proposed as a promising quantum generative model over bitstrings. Recent works have shown that the training of IQP-QCBM is classically tractable w.r.t. the so-called Gaussian kernel maximum mean discrepancy (MMD) loss function, while maintaining the potential of a quantum advantage for sampling itself. Nonetheless, the model has a number of aspects where improvements would be important for more general utility: (1) the basic model is known to be \textit{not} universal - i.e. it is not capable of representing arbitrary distributions, and it was not known whether it is possible to achieve universality by adding \textit{hidden} (ancillary) qubits; (2) a fixed Gaussian kernel used in the MMD loss can cause training issues, e.g., vanishing gradients. In this paper, we resolve the first question and make decisive strides on the second. We prove that for an $n$-qubit IQP generator, adding $n + 1$ \textit{hidden} qubits makes the model universal. For the latter, we propose a \textit{kernel-adaptive} training method, where the kernel is adversarially trained. We show that in the kernel-adaptive method, the convergence of the MMD value implies weak convergence in distribution of the generator. We also analytically analyze the limitations of the MMD-based training method. Finally, we verify the performance benefits on the dataset crafted to spotlight improvements by the suggested method. The results show that kernel-adaptive training outperforms a fixed Gaussian kernel in total variation distance, and the gap increases with the dataset dimensionality. These modifications and analyses shed light on the limits and potential of these new quantum generative methods, which could offer the first truly scalable insights in the comparative capacities of classical versus quantum models, even without access to scalable quantum computers.
\end{abstract}
    
\section{Introduction}

Quantum computing holds great promise for advancing machine learning, yet achieving a practical quantum advantage remains a significant challenge. While theoretical results have shown separations in learning capabilities—often grounded in cryptographic hardness assumptions~\cite{liu2021rigorous, gyurik2023exponential}, such proofs are not expected to be necessarily relevant in practical settings. In real-world applications, a model's value is ultimately measured by its empirical performance. However, current quantum hardware limitations make it practically impossible to evaluate a model, not to mention training, beyond a small qubit count. This constraint prevents any meaningful assessment of their performance on tasks of a realistic scale.

Surprisingly, a specific quantum generative model can be efficiently trained and, to an extent, evaluated at scale. The model is a type of Quantum Circuit Born Machine (QCBM) \cite{benedetti2019generative, liu2018differentiable}, with the circuit specifically chosen as a parameterized instance of Instantaneous Quantum Polynomial (IQP) circuit \cite{Bremner_2010, nakata2014diagonal, recio2025train}. The model prepares an $n$-qubit state from a parameterized IQP circuit, and measuring each qubit in the computational basis will generate a bitstring of length $n$. This corresponds to sampling from the underlying circuit distribution. We call this model IQP-QCBM. This model has been demonstrated to be efficiently trainable at scale on classical hardware under a specific training objective—namely, the Maximum Mean Discrepancy (MMD) with a Gaussian kernel. This was shown on high-dimensional datasets with models of up to 1\,000 qubits~\cite{recio2025train, recio2025iqpopt}, achieving performance comparable to or better than classical counterparts. Moreover, this model exhibits provable classical hardness when used as a sampler~\cite{Bremner_2010, Bremner_2016, marshall2024improved}, leaving room for quantum advantage. However, despite these advances, several challenges remained. 

One of the most fundamental properties of a machine learning model family is universality. In generative modeling it is the ability to approximate arbitrary distributions given sufficient model resources. Universality has been established for several important families, including Restricted Boltzmann Machines \cite{LeRouxBengio2008_RBM_UA,MontufarAy2011_RBM_DBN_refinements}, GAN-style generators \cite{LiuBousquetChaudhuri2017_GAN_convergence,LuLu2020_UA_distributions}, and diffusion/score-based models\cite{SongEtAl2021_ScoreSDE,LeeLuTan2023_SGM_convergence}. In quantum generative modeling, parameterized quantum-circuit generators have been shown to be universal for continuous multivariate distributions, with explicit resource trade-offs \cite{BartheEtAl2025_PQC_universal_generative}; related work characterizes the expressive power of such circuits and quantum GANs \cite{DuEtAl2020_PQC_expressive_power,DallaireDemersKilloran2018_QGAN}. However, the basic $n$ qubits IQP-QCBM architecture lacks the expressive power to model arbitrary probability distributions. Specifically, although the model has $2^n - 1$ parameters available to tune, the model as defined in \cite{recio2025train} is \emph{not} universal over the space of probability distributions on $n$-bit strings, the same authors conjectured that universality might be achieved by augmenting the system with additional hidden (ancillary) qubits and considering marginal distributions over the visible ones. In this paper, we resolve this conjecture by providing two constructive proofs that IQP-QCBMs with hidden qubits can, in fact, represent arbitrary distributions over the visible qubits. Our first concept proof construction demonstrates asymptotic universality, where the total variation distance between the model and target distribution decreases exponentially with the number of hidden qubits and requires just a restricted set of relative phases. Our second construction achieves \emph{exact} universality by simply doubling the number of qubits and uses the full $[0, 2\pi)$ phase range.

Another limitation of the original approaches involves the lack of flexibility with respect to the kernel that can be used in MMD-based training. The training method proposed in~\cite{recio2025train} relies on a representation of the Gaussian kernel as a mixture of Pauli-$Z$ observables, which enables efficient classical estimation. However, MMD with a Gaussian kernel is not always sensitive to distinguishing between distinct distributions, which may yield vanishing gradients, making optimization difficult. To address this, we generalize the MMD objective to arbitrary kernels and show that it remains efficiently estimable using classical algorithms. Building on this, we suggest leveraging a kernel-adaptive adversarial training procedure for IQP-QCBMs. While the idea of trainable kernels was studied for classical generative models before~\cite{li2017mmd, li2019implicit, arbel2018gradient, mroueh2021convergence}, we are the first to propose adaptive kernels for the IQP-QCBM model by parameterizing the spectral measure of the kernel. Our method tailors the choice of kernel, or its spectral measure, to be precise, to the specific learning task, which might enhance the convergence. We support this approach through both theoretical analysis and empirical validation.

The structure of the paper is as follows. In~\cref{sec: prelims}, we review the background on IQP-QCBMs. \cref{seq: universality} presents our universality results for models with hidden qubits. In~\cref{sec: kernel adaptive training}, we introduce kernel-adaptive training and provide theoretical justifications for it. Also, we provide limitations of MMD-based training in \cref{seq: MMD limitations}. The \cref{seq: numerics} contains our numerical experiments on the benchmark dataset. We conclude with a discussion in~\cref{seq: discussion}.

\section{Preliminaries}\label{sec: prelims}
We focus on a specific class of quantum generative models, Instantaneous Quantum Polynomial (IQP) Quantum Circuit Born Machine (QCBM), which is of great interest: it demonstrates a provable separation between quantum and classical sampling algorithms~\cite{Bremner_2010,Bremner_2016}. In this paper, we shall refer to it as IQP-QCBM. We recap two interesting and important properties of it: (1) non-universality (\cref{subsec:non-universality}) and (2) with the maximum mean discrepancy (MMD) metric, it is classically trainable (\cref{subsec:non-universality}).

\subsection{IQP-QCBM model}\label{subsec: model} 
\begin{definition}[Quantum Circuit Born Machine (QCBM)]
A QCBM is a parameterized quantum generative model specified by: (1) A parameterized quantum circuit $U(\theta)$ acting on $n$ qubits, preparing the state $U(\theta) \ket{0}^{\otimes n}$; (2) Measurement in the computational basis yields bitstrings $x$ with probability: $q_{\theta}(x) = |\langle x | \psi(\theta) \rangle|^2$ which constitutes the output distribution of the model.
\end{definition}

Here, we use the following parameterization of $U(\theta)$.
\begin{definition}[Parametrized instantaneous quantum polynomial (IQP) circuit]
A parametrized IQP circuit on $n$ qubits is a quantum circuit of the form $U(\theta) = H^{\otimes n} D(\theta) H^{\otimes n}$,
where $H$ denotes the Hadamard gate, and $D(\theta) = \prod_j e^{i\theta_j Z_{g_j}}$ with $\theta_j\in[0, 2\pi)$, where $Z_{g_j}$ is a tensor product of Pauli-$Z$ operators acting on a subset of qubits specified by the nonzero entries of $g_j \in \{0,1\}^n$.
\end{definition}

Theoretically, there exists an output distribution of IQP-QCBM that is computationally intractable for classical algorithms~\cite{Bremner_2010}, under standard complexity-theoretic assumptions. Furthermore, this quantum-classical separation extends to approximate sampling as well~\cite{Bremner_2016, marshall2024improved}.
Apart from the sampling hardness, it is necessary to investigate IQP-QCBM universality as a generative model.
\subsection{Non-universality}\label{subsec:non-universality}
A key property of generative models is universality: informally, with sufficient model capacity they can approximate any probability distribution. We give a rigorous definition below. 

\begin{definition}[Universality of generative models]
A family of generative models $Q$ is \textit{universal} if for any target distribution $p(x), x\in \mathcal{X}$ ($\mathcal{X}$ is a topological space) and any precision $\varepsilon > 0$, there exist model $q_{\theta}\in Q$ and a parameter setting $\theta$ such that $d(p, q_{\theta}) \leq \varepsilon$, where $d(\cdot,\cdot)$ is a metric between probability distributions, e.g., total variation, Wasserstein. We say universality is \emph{exact} if the condition holds with $\varepsilon = 0$ i.e. there exists $q_{\theta} \in Q$ such that $d(p, q_{\theta}) = 0$.

\end{definition}

IQP-QCBMs are not universal, as shown in~\cite{recio2025train}: $n$-qubit models cannot represent any distribution over the Boolean hypercube $\{0,1\}^n$. For example, the distribution $p = \left(\frac{1}{3}, \frac{1}{3}, \frac{1}{3}, 0\right)$ over $\{0,1\}^2$ cannot be represented by $2$-qubit model. More generally, we notice that \textit{no} distribution over $\{0,1\}^2$ with support size $3$ is expressible by a $2$-qubit IQP-QCBM, see derivations in~\cref{proofs: Hidden v.s. visible only: toy example}. This non-universality constrains the model's representational capacity, motivating the need to look at a universal extension of IQP-QCBM introduced in \cref{seq: universality}.

\subsection{Classical training}\label{subsec:classical-training}
We recap MMD's definition on IQP-QCBM and then show it can be estimated efficiently.
\begin{definition}[Maximum Mean Discrepancy (MMD)]
Given two distributions $p$ and $q$ over space $\mathcal{X}$, and a kernel $k: \mathcal{X} \times \mathcal{X} \to \mathbb{R}$ which induces a reproducing kernel Hilbert space (RKHS) $\mathcal{H}$, the MMD metric is:
\begin{equation}\label{MMD_def}
    \operatorname{MMD}(p, q) = \sup_{\substack{f \in \mathcal{H},  \|f\|_{\mathcal{H}} \leq 1}} \left( \mathbb{E}_{x \sim p}[f(x)] - \mathbb{E}_{y \sim q}[f(y)] \right),
\end{equation}
where RKHS norm $\|f\|_{\mathcal{H}}$ is defined with the kernel $k$~\citep{muandet2017kernel}.
\end{definition}
With the Gaussian kernel, i.e., $k(b, b')=\exp(-\sum_{i=1}^n|b_i - b'_i| / 2\sigma^2), b, b'\in\{0,1\}^n$, the MMD metric of the IQP-QCBM model admits the following form~\citep{rudolph2024trainability}:
\begin{equation}\label{eq: Gaussian MMD}
    \operatorname{MMD}^2(p, q_\theta) = \mathbb{E}_{\alpha \sim G_\sigma} \left[ \left( \langle Z_\alpha \rangle_p - \langle Z_\alpha \rangle_{q_\theta} \right)^2 \right],
\end{equation}
where $Z_\alpha\in\mathbb{C}^{2^n\times 2^n}$ is a Pauli-$Z$ operator acting non-trivially on the qubits indexed by the nonzero entries of bitstring $\alpha \in \{0,1\}^n$, and $\langle Z_\alpha \rangle_{q_\theta} = \mathbb{E}_{b \sim q_\theta} \langle b | Z_\alpha | b \rangle$ ($q_\theta$ is the output distribution of IQP-QCBM. The distribution $G_\sigma$ is:
\begin{equation}\label{eq: Gaussian spectral_measure}
    G_\sigma(\alpha) = (1 - p_\sigma)^{n - |\alpha|} p_\sigma^{|\alpha|}, \quad \alpha \in \{0, 1\}^n,
\end{equation}
with $p_\sigma = \frac{1}{2}\left(1 - \exp(-1/2\sigma)\right)$ and $|\alpha|$ is the Hamming weight of $\alpha$.

Crucially, the expectation $\langle Z_\alpha \rangle_{q_\theta}$ can be estimated efficiently with an classical algorithm~\citep{nest2009simulating,recio2025train}.

\begin{lemma}\label{lemma: efficient Pauli-Z estimation IQP}
Given a parameterised IQP circuit $q_\theta$, an expectation value $\langle Z_\alpha \rangle_{q_\theta}$, and an error $\varepsilon\in\mathcal{O}(\mathrm{poly}(n^{-1}))$, there exists a classical algorithm that requires $\mathrm{poly}(n)$ time, and samples a random variable with standard deviation less than $\varepsilon$ that is an unbiased estimator of $\langle Z_\alpha \rangle_{q_\theta}$.
\end{lemma}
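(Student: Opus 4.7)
The plan is to rewrite $\langle Z_\alpha\rangle_{q_\theta}$ as a uniform expectation of a bounded real function on $\{0,1\}^n$, classically sample that function by drawing uniformly random bitstrings, and then invoke standard Monte Carlo concentration. The only nontrivial step is the algebraic reduction, which exploits that $D(\theta)$ is diagonal in the computational basis.

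Concretely, I would use $HZH = X$ and the diagonal action $D(\theta)|x\rangle = e^{i\phi_\theta(x)}|x\rangle$ with $\phi_\theta(x) = \sum_j \theta_j (-1)^{g_j\cdot x}$ to conjugate the expectation through the Hadamard layers, obtaining $\langle Z_\alpha\rangle_{q_\theta} = \langle +|^{\otimes n} D^\dagger(\theta) X_\alpha D(\theta)|+\rangle^{\otimes n}$. Expanding $|+\rangle^{\otimes n}$ in the computational basis, applying $X_\alpha|x\rangle = |x\oplus\alpha\rangle$, and taking the real part (the quantity is real since $Z_\alpha$ is Hermitian) yields
\begin{equation}
\langle Z_\alpha\rangle_{q_\theta} \;=\; 2^{-n}\sum_{x\in\{0,1\}^n}\cos\!\bigl(\phi_\theta(x)-\phi_\theta(x\oplus\alpha)\bigr),
\end{equation}
which is exactly $\mathbb{E}_{x\sim\mathrm{Unif}(\{0,1\}^n)}[f_\alpha(x)]$ for $f_\alpha(x):=\cos(\phi_\theta(x)-\phi_\theta(x\oplus\alpha))$. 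The classical algorithm then draws $x\in\{0,1\}^n$ uniformly, computes the two phases, and returns $f_\alpha(x)$. Since $f_\alpha\in[-1,1]$, this single-sample estimator is unbiased with variance at most $1$; averaging $N=\lceil 1/\varepsilon^2\rceil$ i.i.d. copies gives an unbiased estimator with standard deviation at most $\varepsilon$, and since $\varepsilon^{-1}\in\mathrm{poly}(n)$ by hypothesis, $N$ and the total runtime stay polynomial in $n$.

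The main ``obstacle'' is really just the algebraic rewriting into the uniform-average form; once that is done, everything is a one-line Chebyshev argument. The one subtle point to flag is the implicit assumption that the input IQP circuit is described succinctly, i.e., with polynomially many nonzero $\theta_j$, without which even a single evaluation of $\phi_\theta$ would not be efficient. This matches the standard setting of the classical-simulation results for diagonal/IQP-type circuits cited immediately before the lemma statement, so it is natural to take it for granted here.
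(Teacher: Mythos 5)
Your proposal is correct and follows essentially the same route the paper relies on: rewriting $\langle Z_\alpha\rangle_{q_\theta}$ as a uniform average of $\cos\bigl(\sum_j\theta_j(-1)^{g_j\cdot x}(1-(-1)^{g_j\cdot\alpha})\bigr)$ over random bitstrings and then doing standard Monte Carlo, which is exactly the argument the paper cites for this lemma and spells out in its appendix proof of the hidden-qubit analogue (Remark 1). Your caveat about requiring polynomially many nonzero $\theta_j$ is also the same implicit assumption made there, so nothing is missing.
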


Despite that, we can train the model classically w.r.t. Gaussian kernel MMD, we point out that this kernel choice might bring a limitation: the corresponding distribution $G_\sigma$ has an exponential tail w.r.t. the Hamming weight (see~\cref{eq: Gaussian spectral_measure}), making it hard to distinguish two distributions when $\langle Z_\alpha \rangle_p - \langle Z_\alpha \rangle_{q_\theta}$ is concentrated on the large or small Hamming weights. 
As we will prove in~\cref{sec: kernel adaptive training}, the distribution $G_\sigma$ is actually the \emph{spectral measure} of the kernel $k$ (via Bochner's theorem), allowing for using non-Gaussian kernels in the training.

\section{Universality with hidden qubits} \label{seq: universality}
In this section, we define the extended version of the original IQP-OCBM model by incorporating hidden qubits. We present two key results regarding their universality: one approximate-asymptotic and the stronger one exact. The last one, based on construction, is quite frugal, requiring only doubling the qubit number.

\begin{definition}[IQP-QCBM with hidden qubits]
A parameterized IQP circuit with hidden qubits is an IQP circuit in which a designated subset of qubits (called \emph{hidden qubits}) is traced out prior to measurement.
Let the full system have $m + n$ qubits, where the first $m$ qubits are designated as hidden.
The output distribution is obtained by taking the partial trace over the hidden qubits:
\begin{equation}
    q(x) = \operatorname{Tr}\left({\operatorname{Tr}_{\text{hidden}}\left(\rho \right)\ket{x}\bra{x}}\right),
\end{equation}
where $x \in \{0,1\}^n$ is a computational basis state, and $\rho$ is the quantum state produced by IQP.
\end{definition}

\begin{figure}[t]
  \centering
  \begin{minipage}{0.42\textwidth}
    \centering
  \resizebox{\linewidth}{!}{%
      \begin{tikzpicture}[baseline=(current bounding box.north)]
        \node (circuit) {
          \begin{quantikz}[row sep=0.2cm, column sep=0.2cm]
            \lstick[wires=3]{$\ket{0}^{\otimes m}$}
                  & \gate{H} & \gate[wires=6]{D(\theta)} & \qw \\
            \setwiretype{n}
                  & \push{\vdots} &                  & \push{\vdots} & \\
            \setwiretype{q}
                  & \gate{H} &                           & \qw \\
            \lstick[wires=3]{$\ket{0}^{\otimes n}$}
                  & \gate{H} &                          & \gate{H} & \meter{} \\
            \setwiretype{n}
                  & \push{\vdots} &                  & \push{\vdots} & \\
            \setwiretype{q}
                  & \gate{H} &                          & \gate{H} & \meter{}
          \end{quantikz}
        };
        \fill[gray, opacity=0.2] (-2.7cm,0cm) rectangle (2.8cm,2.4cm)
          node[xshift=-.6cm, yshift=-1.3cm, opacity=0.8]{hidden};
      \end{tikzpicture}%
    }
    \caption{Schematic diagram of IQP-QCBM with hidden qubits.}
    \label{fig:fig2}
  \end{minipage}
\hfill
    \begin{minipage}{0.5\textwidth}
    \centering
    \includegraphics[width=\linewidth]{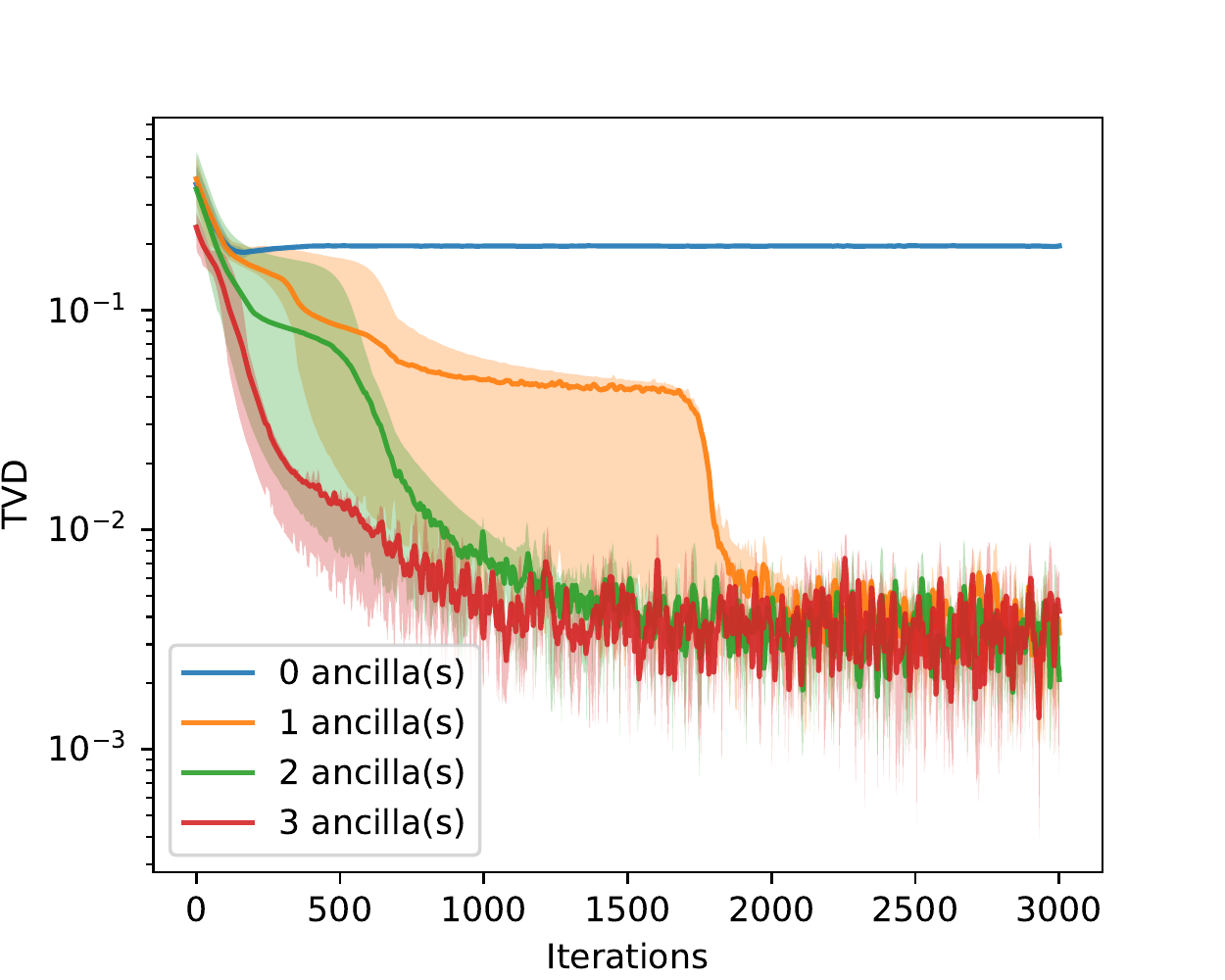}
    \caption{Example to show the expressivity enhancement via adding hidden/ancilla qubits for target distribution vector $p = \left(\frac{1}{3}, \frac{1}{3}, \frac{1}{3}, 0\right)$ over $\{0,1\}^2$. With just one hidden qubit, the total variation distance (TVD) between the generator and the target is greatly reduced over the training iterations.}
    \label{fig: 2-qubit iqp tvd}
  \end{minipage}
\end{figure}

This construction draws inspiration from the structure of Boltzmann machines, where hidden neurons are marginalized out to increase expressivity. Similarly, introducing hidden qubits in IQP circuits allows for universal modeling of probability distributions. Our first proof is asymptotic.

\begin{lemma}[Asymptotic approximate universality with $\pi$-phases]\label{lemma: asymptotic approximate universality}
For any probability distribution $p$ over $\{0,1\}^n$, there exists a parameter setting $\theta=\{\theta_{j,k}\}_{j\in\{0,1\}^m,\,k\in\{0,1\}^n}$ for the IQP circuit acting on $m+n$ qubits such that the total variation distance between the marginal distribution $q_\theta$ on the final $n$ qubits and the target distribution $p$ satisfies $\mathrm{TVD}(p,q_\theta)\in\mathcal{O} \left(\frac{1}{2^{m-n}}\right)$.
\end{lemma}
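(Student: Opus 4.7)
The plan is to approximate the target distribution $p$ by a grid-discretized distribution $\tilde p$ with values in $\frac{1}{2^m}\mathbb{Z}$ that is realized \emph{exactly} by an IQP-QCBM on $m+n$ qubits using only mid-state phases in $\{0,\pi\}$ (hence "$\pi$-phases"), and then bound the discretization error via a standard rounding argument.

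First, I would round each $p(x)$ to a multiple of $1/2^m$ by setting $k_x := \lfloor 2^m p(x) \rfloor$ and redistributing the residual mass $2^m - \sum_x k_x \le 2^n$ one unit at a time to the coordinates with the largest fractional remainders. This gives non-negative integers $k_x$ with $\sum_x k_x = 2^m$ and $|p(x) - k_x/2^m| \le 1/2^m$, so the rounded distribution $\tilde p(x) := k_x/2^m$ satisfies $\mathrm{TVD}(p, \tilde p) \le 2^n / 2^{m+1} = 1/2^{m-n+1}$.

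Next, I would pick any $g:\{0,1\}^m \to \{0,1\}^n$ with preimage sizes $|g^{-1}(x)| = k_x$ (trivially realizable by enumerating bitstrings) and set the mid-state phase function to $\phi(y,z) := \pi\,(g(y) \cdot z)$. Since this takes values only in $\{0,\pi\}$, the amplitudes after $D(\theta)$ are uniformly $\pm 1/2^{(m+n)/2}$; the corresponding IQP angles, obtained via the inverse Walsh--Hadamard transform of $\phi$ over $\{0,1\}^{m+n}$, come out as integer multiples of $\pi/2^{m+1}$ and hence take only finitely many values. After tracing out the hidden register and applying the visible-side Hadamards, the visible outcome probability is $q(x) = \frac{1}{2^{m+2n}}\sum_y \bigl|\sum_z (-1)^{x \cdot z} e^{i\phi(y,z)}\bigr|^2$; substituting $e^{i\phi(y,z)} = (-1)^{g(y)\cdot z}$ collapses the inner sum to $2^n\,\delta_{x,g(y)}$, so $q(x) = |g^{-1}(x)|/2^m = \tilde p(x)$. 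Combined with the rounding bound, $\mathrm{TVD}(p,q) = \mathrm{TVD}(p,\tilde p) \le 1/2^{m-n+1} \in \mathcal{O}(1/2^{m-n})$.

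The main obstacle --- really the only creative step --- is identifying the bilinear ansatz $\phi(y,z) = \pi\,g(y)\cdot z$: this is what causes the Hadamard transform over the visible register to collapse to a Kronecker delta, so that each hidden configuration $y$ "selects" a single visible bitstring $g(y)$ and the marginal reduces to the empirical distribution of $g$. Everything else --- the rounding inequality and the Walsh--Hadamard bookkeeping that expresses $\phi$ in the IQP angle basis --- is routine.
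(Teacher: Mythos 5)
Your proof is correct and is essentially the paper's own argument: each hidden configuration $y$ is assigned a visible bitstring via $\pm 1$ relative phases (your choice $\phi(y,z)=\pi\,\bigl(g(y)\cdot z\bigr)$ is exactly what makes the visible branch the $X$-basis state $H^{\otimes n}\ket{g(y)}$), so the marginal equals the count distribution $|g^{-1}(x)|/2^m$, and the same floor-rounding argument yields $\mathrm{TVD}(p,q)\le 2^n/2^{m+1}\in\mathcal{O}(2^{-(m-n)})$. The differences are purely presentational — you collapse the character sum explicitly where the paper argues at the level of basis states, and your side remark about the granularity of the Walsh--Hadamard coefficients is immaterial since the lemma parameterizes directly by the diagonal phases $\theta_{j,k}$, which in your construction lie in $\{0,\pi\}$ as required.
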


See the proof in~\cref{proof: asymptotic approximate universality}. The basic idea is to show that we can use ancillary registers to provide arbitrary counts of bitstrings in the output register, which, after normalization, achieves arbitrary frequencies of bitstrings, which then approximate the target distribution. This statement is not the most efficient possible, but utilizes only $0$ and $\pi$ relative phases. By using all possible phases, a stronger result can be obtained.

\begin{theorem}[Exact universality]\label{thm: exact universality}
For any target probability distribution $p$ over $\{0,1\}^n$, there exists an IQP circuit acting on $n$ visible qubits and $m = n+1$ hidden qubits that produces $p$ exactly when we measure the final $n$ qubits.
\end{theorem}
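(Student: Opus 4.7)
My plan is to reduce exact universality to a combinatorial decomposition of $p$, built on a structural identity for the visible marginal of an IQP-QCBM with hidden qubits.

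First I would derive the following decomposition. Writing the amplitudes of the $(m+n)$-qubit IQP state as $\alpha(y,x)\propto\sum_{s}e^{i\phi(s)}(-1)^{s\cdot(y,x)}$ with $s=(s_1,s_2)\in\{0,1\}^m\times\{0,1\}^n$, and applying Parseval in the hidden index $y$ (via $\sum_y(-1)^{(s_1+s_1')\cdot y}=2^m\delta_{s_1,s_1'}$), a short calculation gives
\begin{equation*}
q(x)\;=\;\frac{1}{2^m}\sum_{s_1\in\{0,1\}^m}D^{(s_1)}(x),
\end{equation*}
where each $D^{(s_1)}$ is the output distribution of a bare $n$-qubit IQP circuit (without hidden qubits) with phases $\phi(s_1,\cdot)$. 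Since the full $Z$-Pauli parameterization of $\phi$ lets the functions $\phi(s_1,\cdot)$ be chosen independently across $s_1$, it suffices for $m=n+1$ to express $p$ as a uniform average of $2^{n+1}$ bare $n$-qubit IQP distributions.

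Second, I would show that every two-support distribution $t\delta_{x_0}+(1-t)\delta_{x_1}$ is itself a bare $n$-qubit IQP distribution: the unit-modulus phase pattern $\epsilon_s=\sqrt{t}(-1)^{s\cdot x_0}+i\sqrt{1-t}(-1)^{s\cdot x_1}$ has Hadamard transform concentrated on $\{x_0,x_1\}$ with squared magnitudes $2^{2n}t$ and $2^{2n}(1-t)$, giving the required probabilities exactly. The core remaining step is then a purely combinatorial claim: for any $p\in\Delta^{2^n-1}$, one can produce $2^{n+1}$ two-support distributions averaging uniformly to $p$. I would phrase this as finding a weighted multigraph on vertex set $\{0,1\}^n$ with exactly $2^{n+1}$ edges (loops modelling Diracs) such that each vertex $x$ has weighted degree $c(x):=2^{n+1}p(x)$ and each edge's endpoint weights sum to one; a candidate construction is to allocate $\lfloor c(x)\rfloor$ loops at each $x$ and then cover the fractional residuals $\{c(x)\}$ with the remaining $\sum_x\{c(x)\}$ two-endpoint edges. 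Combining the three ingredients produces an IQP circuit on $n+(n+1)$ qubits whose visible marginal equals $p$.

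The main obstacle is the feasibility of the residual-matching in the combinatorial step. A naive greedy pairing breaks down when the fractional residuals are spread over many points with small magnitudes (for instance $c(x)\approx 1+1/2^n$ at every vertex), because a two-support piece consumes only two residual points at a time while the number of available slots can be much smaller than half the number of residual vertices. The factor-of-two slack provided by $m=n+1$ rather than $m=n$ is precisely what restores feasibility: with $2^{n+1}$ edges one can build a multigraph whose degree sequence has some vertices of degree at least three, and the resulting local weight constraints then admit an explicit solution. Making this rigorous uniformly over all $p$ is the main technical content of the proof.
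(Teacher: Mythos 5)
Your overall architecture matches the paper's proof: the visible marginal of an IQP circuit with $m$ hidden qubits is a uniform mixture of $2^m$ output distributions of bare $n$-qubit IQP circuits with independently chosen phases, each mixture component can realize an arbitrary 2-sparse distribution, and the theorem then reduces to decomposing $p$ into a uniform mixture of $2^{n+1}$ 2-sparse distributions. Your second ingredient is in fact slicker than the paper's: the unit-modulus phase pattern $\epsilon_s=\sqrt{t}\,(-1)^{s\cdot x_0}+i\sqrt{1-t}\,(-1)^{s\cdot x_1}$ realizes $t\delta_{x_0}+(1-t)\delta_{x_1}$ in one line, whereas the paper proves the analogous statement (its ``2-sparse distributions via IQP state'' lemma) through a single-qubit case plus an $X$-basis-controlled permutation unitary that preserves the uniform-magnitude property.

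The genuine gap is the combinatorial step, and you have identified it yourself: you never prove that every distribution on $2^n$ outcomes is a uniform average of $2^{n+1}$ 2-sparse distributions. Your candidate construction (allocate $\lfloor c(x)\rfloor$ loops, then match fractional residuals with the remaining edges) is, as you note, infeasible in general because a 2-sparse piece touches only two residual vertices while the number of leftover edges $\sum_x\{c(x)\}$ can be far smaller than half the number of vertices carrying residuals; the closing appeal to ``factor-of-two slack restores feasibility'' is not an argument. The claim is true, and the paper closes exactly this hole with a two-stage construction you should adopt or reprove: first, any $N$-dimensional probability vector is a uniform mixture of $N$ \emph{3-sparse} vectors, shown by filling an $N\times N$ allocation matrix with row sums $1/N$ and column sums $p_j$ — sort $p$, place the entries below $1/N$ on the diagonal, then spread each larger entry greedily across the remaining row capacities, which guarantees at most three nonzeros per row; second, each 3-sparse vector with nonzero entries $p_a\le p_b\le p_c$ splits as a uniform mixture of the two 2-sparse vectors $(2p_a,\,1-2p_a)$ and $(2p_b,\,1-2p_b)$, which is precisely where the doubling from $N$ to $2N$ (i.e.\ $m=n+1$ rather than $m=n$) is consumed. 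Until you supply a proof of this decomposition (the paper's route or an honest proof of your multigraph claim), the argument is incomplete at its central step.
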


\begin{proof}[Proof sketch]
First, we show that tracing out the hidden $m$ qubits gives a reduced density matrix:
\begin{equation}
    \rho_2 \coloneqq\operatorname{Tr}_{\text{hidden}}(\rho)= \frac{1}{2^m} \sum_{k \in \{0, 1\}^m} |\psi_k\rangle \langle \psi_k|,
\end{equation}
where $|\psi_k\rangle = \frac{1}{\sqrt{2^n}} \sum_{y \in \{0, 1\}^n} e^{i\theta_{k,y}} |y\rangle$.
We show that it is possible to encode any 2-sparse distribution (i.e. supported on at most two outcomes) over $\{+,-\}^n$ into $|\psi_k\rangle$ by tuning the parameters.
The probability of sampling string $b \in \{0,1\}^n$ is $|\langle \tilde{b}| \psi_k\rangle|^2$ with $|\tilde{b}\rangle = H^{\otimes n} |b\rangle$.
Next, we prove that any probability distribution over $\{0,1\}^n$ can be decomposed as a uniform mixture of at most $2^{n+1}$ 2-sparse distributions. Therefore, by choosing $m = n+1$, we can represent any target distribution $p$ exactly using an IQP circuit with hidden qubits. See the~\cref{proof: exact universality} for the complete proof.  
\end{proof}

As illustrated in \cref{fig: 2-qubit iqp tvd}, for the 2-bit target distribution $p=(\tfrac{1}{3},\tfrac{1}{3},\tfrac{1}{3},0)$ over $\{0,1\}^2$, adding just one hidden qubit (orange curve) enables an IQP–QCBM to achieve a substantially lower total variation distance than a baseline IQP model without hidden qubits. Moreover, we show that one hidden qubit is theoretically sufficient to represent \emph{any} distribution over two bits, see \cref{proofs: Hidden v.s. visible only: toy example}. However, for $n>2$ it still remains open whether the requirement of $n{+}1$ hidden qubits for universality is optimal. We also highlight that the universal construction suggested above is purely theoretical, entails exponentially many parameters, and enlarges the register from $n$ to $2n{+}1$ qubits, which immediately kills trainability — so it is not a practical prescription. In practice, far fewer hidden qubits typically suffice to boost expressivity and capture many distributions of interest.

\begin{remark}\label{remark: efficient Pauli-Z estimation IQP with hidden qubits}
    Similar to \cref{lemma: efficient Pauli-Z estimation IQP}, we can efficiently estimate the expectation value of Pauli-$Z$ words w.r.t. the distribution produced by IQP-QCBM with \textit{hidden} qubits: given a parameterized IQP-QCBM model on $m$ hidden and $n$ visible qubits, which produces a distribution $q_\theta$ over $\{0,1\}^n$, an expectation value $\langle Z_\alpha \rangle_{q_\theta}$. Given an error $\varepsilon\in\mathcal{O}(\mathrm{poly}(n^{-1}, m^{-1}))$, there exists a classical algorithm that requires $\mathrm{poly}(n, m)$ time, and samples a random variable with standard deviation less than $\varepsilon$ that is an unbiased estimator of $\langle Z_\alpha \rangle_{q_\theta}$. We include the precise argument in the~\cref{proof: efficient Pauli-Z estimation IQP with hidden qubits}.
\end{remark}

\section{Kernel-adaptive training}\label{sec: kernel adaptive training}
In this section, we introduce a kernel-adaptive adversarial training procedure for IQP-QCBM, designed to better align the model with the specific learning task of interest. First, we begin with an example to show that MMD with a fixed Gaussian kernel fails: there exist two distributions where the Gaussian MMD value decays exponentially while the total variation distance is constant.  We note that the MMD distance is estimated via sampling, so exponential precision is not achievable in polynomial time. Second, we derive a generalized MMD expression applicable to arbitrary kernels, which can be efficiently estimated with classical algorithms. Third, we present the adversarial training method, which adaptively learns the kernel on the fly. 

\subsection{Limitations of the Gaussian MMD loss}
We show that MMD with the Gaussian kernel might be ineffective in distinguishing probability distributions, resulting in a near-zero MMD value and vanishing gradients of the generator.

\begin{lemma}[Kernel choice matters]\label{lemma: kernel choice matters}
    For every $n \in \mathbb{N}$, there exist a target distribution $p$ on $\{0,1\}^n$, an IQP circuit on $n$ qubits with parameters $\theta$, and a setting $\theta'$ yielding distribution $q_{\theta'}$, such that: (1) $\operatorname{TVD}(p,q_{\theta'}) \in \Omega(1)$; (2) For a Gaussian kernel $k_\sigma$: $\operatorname{MMD}^2_{k_\sigma}(p,q_{\theta'}),  \big\|\nabla_\theta \operatorname{MMD}^2_{k_\sigma}(p,q_{\theta'})\big\| \in \mathcal{O}(2^{-n})$; (3) For a characteristic kernel $\kappa$, there exists a constant $C > 0$ such that for all $n$: $\operatorname{MMD}^2_{\kappa}(p,q_{\theta'})$, $\big\|\nabla_\theta \operatorname{MMD}^2_{\kappa}(p,q_{\theta'})\big\| \geq C$.
\end{lemma}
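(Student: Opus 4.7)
The plan is to engineer, for every $n$, a target distribution $p$ and an IQP setting $\theta'$ whose Fourier coefficients agree everywhere except at a single high-weight character $\alpha^\star$. Exploiting the spectral decomposition $\operatorname{MMD}^2_k(p,q) = \sum_\alpha \mu(\alpha)(\hat p(\alpha) - \hat q(\alpha))^2$ (Bochner, with $\hat p(\alpha)=\langle Z_\alpha\rangle_p$), this forces the Gaussian MMD to be exponentially small, because its spectral measure $G_\sigma$ places exponentially little mass on high-weight $\alpha$, while a handcrafted characteristic kernel whose spectral measure loads $\alpha^\star$ with constant mass still detects the gap at order $\Theta(1)$.

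For the construction I would take $\alpha^\star = 1^n$ when $n$ is odd and $\alpha^\star = 1^{n-1}0$ when $n$ is even, so that $|\alpha^\star|$ is odd and at least $n-1$. Let $p$ be the uniform distribution on the parity coset $\{b\in\{0,1\}^n : \alpha^\star\cdot b \equiv 0 \bmod 2\}$, whose Fourier transform is $\hat p = \delta_{0^n} + \delta_{\alpha^\star}$. Let the IQP generators include every single-qubit $Z_i$ together with $Z_{\alpha^\star}$, and set $\theta'$ to $\pi/4$ on every $Z_i$ and $0$ on $Z_{\alpha^\star}$. A short calculation using $H e^{i(\pi/4)Z}H|0\rangle = (|0\rangle+i|1\rangle)/\sqrt{2}$ gives $|\psi(\theta')\rangle = |\phi\rangle^{\otimes n}$ with $|\phi\rangle = (|0\rangle+i|1\rangle)/\sqrt{2}$, so $q_{\theta'}$ is uniform on $\{0,1\}^n$ and $\hat q_{\theta'} = \delta_{0^n}$. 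Hence $\hat p - \hat q_{\theta'}$ is supported only at $\alpha^\star$ with value $1$, and a direct sum yields $\operatorname{TVD}(p,q_{\theta'}) = 1/2$, establishing (1).

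Item (2) then follows by plugging in: $\operatorname{MMD}^2_{k_\sigma}(p,q_{\theta'}) = G_\sigma(\alpha^\star) \le p_\sigma^{n-1}$, which is $\mathcal{O}(2^{-n})$ because $p_\sigma<1/2$ strictly. For the gradient, differentiating $|\psi(\theta)\rangle = U(\theta)|0^n\rangle$ gives $\partial_{\theta_k}\langle Z_\alpha\rangle_{q_\theta} = i\langle\psi|[Z_\alpha, X_{g_k}]|\psi\rangle$, so $|\partial_{\theta_k}\langle Z_{\alpha^\star}\rangle| \le 2$ for every $k$; since only $\alpha^\star$ contributes to the Fourier gap, each component of $\nabla_\theta\operatorname{MMD}^2_{k_\sigma}$ is at most $4G_\sigma(\alpha^\star)$, and with polynomially many parameters the gradient norm is still exponentially small in $n$. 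For item (3) I would define $\kappa$ by the strictly positive spectral measure $\mu(\alpha) = \tfrac12\delta_{\alpha,\alpha^\star} + \tfrac{1}{2(2^n-1)}\mathbb{1}[\alpha\ne\alpha^\star]$, which is characteristic (positive on every character); then $\operatorname{MMD}^2_\kappa(p,q_{\theta'}) = \mu(\alpha^\star) = 1/2$. For the gradient lower bound I would inspect the $\theta_{\alpha^\star}$-component alone: the identity $[Z_{\alpha^\star}, X_{\alpha^\star}] = 2 i^{|\alpha^\star|}\, Y^{\otimes|\alpha^\star|}\otimes I^{\otimes(n-|\alpha^\star|)}$ is nonzero \emph{precisely} because $|\alpha^\star|$ is odd, and since $Y|\phi\rangle = |\phi\rangle$ one has $\langle\psi|Y^{\otimes|\alpha^\star|}\otimes I|\psi\rangle = 1$, giving $|\partial_{\theta_{\alpha^\star}}\langle Z_{\alpha^\star}\rangle| = 2$ and therefore a gradient norm at least $2\mu(\alpha^\star)\cdot 2 = 2$.

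The main obstacle is the simultaneous control of all four quantitative targets at the \emph{same} point $\theta'$: the Fourier gap must live on a high-weight $\alpha^\star$ to drown the Gaussian MMD, yet the IQP parameterization must expose a direction along which $\partial\langle Z_{\alpha^\star}\rangle_{q_{\theta'}}$ stays $\Theta(1)$ so the characteristic kernel's gradient survives. The parity constraint ($|\alpha^\star|$ odd) is forced by $[Z_{\alpha^\star}, X_{\alpha^\star}]$ vanishing at even weights, which is why the choice of $\alpha^\star$ and $p$ must adapt to the parity of $n$. Once $\theta'$ is picked so that $|\psi(\theta')\rangle$ is the $+1$-eigenstate of $Y^{\otimes n}$, every Pauli and commutator expectation becomes calculable in closed form and the separation between the two kernels collapses to the inequality $G_\sigma(\alpha^\star)\ll \mu(\alpha^\star)$.
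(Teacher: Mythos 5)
Your proposal is correct, and the items you flag as potential obstacles (odd weight of $\alpha^\star$, the extra $Z_{\alpha^\star}$ generator to carry a $\Theta(1)$ gradient component) are handled properly; but your construction is genuinely different from the paper's. The paper takes $p$ uniform, a circuit with only single-qubit generators, and $\theta'=(\pi/8,\pi/4,\dots,\pi/4)$, so the single Fourier discrepancy sits at the \emph{weight-one} frequency $\alpha^\star=(1,0,\dots,0)$ with magnitude $\tfrac{\sqrt2}{2}$ and $\mathrm{TVD}=\tfrac{\sqrt2}{4}$; the Gaussian value is then $\tfrac12 p_\sigma(1-p_\sigma)^{n-1}$ and the gradient $2G_\sigma(\alpha^\star)$, computed in closed form without commutators, and item (3) uses essentially the same "spike at $\alpha^\star$ plus a small uniform floor" kernel as yours. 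You instead put the gap at a \emph{high}, odd-weight $\alpha^\star$ by taking $p$ uniform on the parity coset $\{\alpha^\star\cdot b=0\}$ and $q_{\theta'}$ uniform (a product of $Y$-eigenstates), which has two consequences: (i) your Gaussian bound is $G_\sigma(\alpha^\star)\le p_\sigma^{\,n-1}\le 2^{-(n-1)}$ \emph{uniformly in} $\sigma$, so the stated $\mathcal{O}(2^{-n})$ rate holds literally, whereas the paper's low-weight placement gives decay $(1-p_\sigma)^{n-1}$ with $1-p_\sigma>\tfrac12$, i.e.\ exponentially small but strictly slower than $2^{-n}$ for fixed $\sigma>0$; and (ii) your high-weight choice matches the narrative (and the numerics) about Gaussian kernels being blind to high-frequency discrepancies better than the paper's own example does. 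The price is extra machinery: you need the Heisenberg-form derivative $\partial_{\theta_k}\langle Z_\alpha\rangle = i\langle\psi|[Z_\alpha,X_{g_k}]|\psi\rangle$, the parity argument that $[Z_{\alpha^\star},X_{\alpha^\star}]\neq 0$ only for odd $|\alpha^\star|$, and the added generator $Z_{\alpha^\star}$ at angle zero, whereas the paper's single-qubit construction makes every characteristic function and partial derivative an elementary product of cosines. Both routes are valid; yours is slightly stronger on the quantitative Gaussian claim, the paper's is computationally more elementary.
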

See \cref{proof: kernel choice matters} for the proof. The proof's key insight is: there exist distinct distributions $p$ and $q_{\theta'}$ such that the difference of their characteristic functions, $|\langle Z_\alpha \rangle_p - \langle Z_\alpha \rangle_{q_{\theta'}}|$, is supported on a few bitstring $\alpha$ of very small or very large Hamming weights. Now, consider the Gaussian kernel whose spectral measure $G_\sigma(\alpha)$ (\cref{eq: Gaussian spectral_measure}) decays exponentially with $n$ on both tails. In this case, $|\langle Z_\alpha \rangle_p - \langle Z_\alpha \rangle_{q_{\theta'}}|$ is supported on the tails of $G_\sigma(\alpha)$, which exponentially decays. In contrast, if we are allowed to deviate from the Gaussian, then we can choose a kernel $\kappa$ that concentrates on large or small Hamming weights, which prevents the exponential decay of the MMD value (w.r.t.~$n$). The same argument applies to the gradient of MMD easily by the chain rule. Practically, one can choose the kernel $\kappa$ adaptively to allocate more spectral mass to the bitstring where the generator $\langle Z_\alpha \rangle_{q_{\theta'}}$ differs the most from the data $\langle Z_\alpha \rangle_p$.  

\subsection{MMD with adaptive kernel} 
We generalize MMD's expression beyond the Gaussian kernel based on the spectral representation of MMD, as a direct consequence of Bochner's theorem \cite{bochner1933monotone}.
\begin{lemma}[Spectral representation of MMD]\label{lemma: spectral_MMD}
Let $X$ be a locally compact Abelian group and $p$, $q$ be probability distributions on $X$. 
Given a bounded, stationary kernel $k \colon X \times X \to \mathbb{R}$, and its Fourier transformation, $G$, which is a non-negative measure on the dual group $\widehat{X}$ (called the spectral measure of $k$)
Then the squared Maximum Mean Discrepancy admits the spectral representation:
\begin{equation}\label{lemma: MMD spectral representation}
    \mathrm{MMD}^2_k(p, q) = \mathbb{E}_{\alpha \sim G}\left[ \left| \phi_p(\alpha) - \phi_q(\alpha) \right|^2 \right],
\end{equation}
where $\phi_p(\alpha)=\mathbb{E}_{x\sim p } [\exp(-i\alpha \cdot x)]$ and $\phi_q(\alpha)=\mathbb{E}_{x\sim p } [\exp(-i\alpha \cdot x)]$ are the characteristic functions of $p$ and $q$ respectively. See~\cite{muandet2017kernel} for a proof.
\end{lemma}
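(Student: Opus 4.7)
The plan is to reduce the statement to a direct application of Bochner's theorem combined with Fubini's theorem, by first expanding the squared MMD as an expectation over kernel evaluations and then substituting the Fourier representation of the stationary kernel in terms of its spectral measure.

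First, I would recall the kernel-mean-embedding identity that holds for any symmetric positive-definite kernel $k$ and any two probability measures $p,q$:
\begin{equation}
\mathrm{MMD}^2_k(p,q) \;=\; \mathbb{E}_{x,x'\sim p}[k(x,x')] - 2\,\mathbb{E}_{x\sim p,\,y\sim q}[k(x,y)] + \mathbb{E}_{y,y'\sim q}[k(y,y')].
\end{equation}
This is the usual three-term expansion obtained from $\mathrm{MMD}^2_k(p,q) = \|\mu_p - \mu_q\|_{\mathcal{H}}^2$, where $\mu_p$ is the kernel mean embedding of $p$ in the RKHS $\mathcal{H}$ associated with $k$. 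Since $k$ is stationary, there exists a function $\psi \colon X \to \mathbb{R}$ with $k(x,y) = \psi(x-y)$.

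Next, I invoke Bochner's theorem on the locally compact Abelian group $X$: because $\psi$ is continuous and positive definite (as $k$ is a kernel) and $k$ is bounded, $\psi$ admits a representation
\begin{equation}
\psi(z) \;=\; \int_{\widehat{X}} \overline{\chi_\alpha(z)} \, dG(\alpha) \;=\; \int_{\widehat{X}} e^{-i\alpha\cdot z}\, dG(\alpha),
\end{equation}
for a finite non-negative measure $G$ on the dual group $\widehat{X}$; finiteness of $G$ follows from $G(\widehat{X}) = \psi(0) = k(x,x) < \infty$. Substituting $k(x,y) = \int e^{-i\alpha\cdot(x-y)} dG(\alpha)$ into each term of the expansion and applying Fubini's theorem yields
\begin{equation}
\mathbb{E}_{x,x'\sim p}[k(x,x')] \;=\; \int_{\widehat{X}} \mathbb{E}_{x\sim p}[e^{-i\alpha\cdot x}]\, \mathbb{E}_{x'\sim p}[e^{i\alpha\cdot x'}]\, dG(\alpha) \;=\; \int_{\widehat{X}} |\phi_p(\alpha)|^2\, dG(\alpha),
\end{equation}
and analogously for the $(q,q)$ and cross $(p,q)$ terms.

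Finally, I would combine the three resulting spectral integrals under the sign of $dG(\alpha)$ and recognize the algebraic identity $|\phi_p(\alpha)|^2 - \phi_p(\alpha)\overline{\phi_q(\alpha)} - \overline{\phi_p(\alpha)}\phi_q(\alpha) + |\phi_q(\alpha)|^2 = |\phi_p(\alpha)-\phi_q(\alpha)|^2$, producing the claimed formula. The only technical obstacle is justifying the interchange of integration orders; here this is painless because $|e^{-i\alpha\cdot(x-y)}|\equiv 1$, the spectral measure $G$ is finite (as $k$ is bounded), and $p,q$ are probability measures, so the integrand is integrable against the product measure $G\otimes p\otimes q$ and Fubini applies unconditionally.
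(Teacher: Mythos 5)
Your proof is correct and is essentially the same argument the paper delegates to its citation of \cite{muandet2017kernel}: expand $\mathrm{MMD}^2_k(p,q)=\|\mu_p-\mu_q\|_{\mathcal{H}}^2$ into the three kernel expectations, insert the Bochner representation of the stationary kernel, and interchange integrals by Fubini (harmless since characters have modulus one, $G$ is finite, and $p,q$ are probability measures). The only implicit extra hypothesis is continuity of the positive-definite function for Bochner's theorem, which is vacuous in the paper's actual setting $X=\{0,1\}^n$.
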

We notice that expectation values of Pauli-$Z$ both for target and IQP-QCBM produced distributions are characteristic functions of corresponding distributions: $\langle Z_\alpha \rangle_p = \phi_p(\alpha)$ and $\langle Z_\alpha \rangle_{q_\theta} = \phi_{q_\theta}(\alpha$.
This combined with \cref{lemma: MMD spectral representation}, gives us:
\begin{theorem}[Generalized kernel MMD]\label{thm:generalized}
    For any stationary and bounded kernel $k$ over $\{0,1\}^n$, let $G$ be the Fourier transform of $k$. The $\mathrm{MMD}$ loss of IQP-QCBM can be expressed as:
    \begin{equation}\label{eq: MMD general kernel}
        \operatorname{MMD}_G^2(p, q_\theta) = \mathbb{E}_{\alpha \sim G} \left(\langle Z_\alpha \rangle_p - \langle Z_\alpha \rangle_{q_\theta}\right)^2
    \end{equation}
\end{theorem}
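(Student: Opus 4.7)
The plan is to derive the statement as an essentially immediate consequence of the spectral-MMD representation (\cref{lemma: spectral_MMD}) specialised to the Boolean group $\{0,1\}^n$, combined with the identification of Pauli-$Z$ expectation values with characters of that group, which was already flagged in the discussion preceding the theorem.

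First, I would fix the group-theoretic setup. The sample space $\{0,1\}^n$ equipped with bitwise XOR is a finite, hence locally compact, Abelian group $(\mathbb{Z}/2\mathbb{Z})^n$, so \cref{lemma: spectral_MMD} applies. Its Pontryagin dual is isomorphic to the group itself, with characters indexed by $\alpha\in\{0,1\}^n$ and given by $\chi_\alpha(x)=(-1)^{\alpha\cdot x}$. Since $k$ is stationary and bounded on this discrete group, Bochner's theorem guarantees a non-negative spectral measure $G$ on the dual such that $k(x,y)=\mathbb{E}_{\alpha\sim G}[\chi_\alpha(x\oplus y)]$, and this $G$ is exactly the Fourier transform of $k$ referred to in the theorem statement.

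Second, I would apply \cref{lemma: spectral_MMD} to obtain
\[
\operatorname{MMD}_k^2(p,q_\theta)=\mathbb{E}_{\alpha\sim G}\bigl[\,|\phi_p(\alpha)-\phi_{q_\theta}(\alpha)|^2\,\bigr],
\]
where the characteristic function in the discrete setting is read off as $\phi_p(\alpha)=\mathbb{E}_{x\sim p}[\chi_\alpha(x)]=\mathbb{E}_{x\sim p}[(-1)^{\alpha\cdot x}]$, and similarly for $q_\theta$. Then I would match $\phi_p(\alpha)$ with $\langle Z_\alpha\rangle_p$: on any computational-basis state, $Z_\alpha=\bigotimes_i Z_i^{\alpha_i}$ acts as $(-1)^{\alpha\cdot x}$, so $\langle x|Z_\alpha|x\rangle=\chi_\alpha(x)$ and averaging gives $\langle Z_\alpha\rangle_p=\phi_p(\alpha)$, $\langle Z_\alpha\rangle_{q_\theta}=\phi_{q_\theta}(\alpha)$. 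Because the characters of $(\mathbb{Z}/2\mathbb{Z})^n$ are real-valued, the modulus-squared collapses to an ordinary square, producing the claimed identity.

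The only genuinely nontrivial step is the translation from the generic form of \cref{lemma: spectral_MMD}, stated using $\exp(-i\alpha\cdot x)$ (which is natural for groups like $\mathbb{R}^n$), to the Boolean dual where characters take values in $\{\pm 1\}$. I would handle this by pointing out that on $(\mathbb{Z}/2\mathbb{Z})^n$ the complex exponential $e^{-i\alpha\cdot x}$ in the lemma must be read as the canonical embedding of the dual characters $\chi_\alpha$ into $\mathbb{C}$, which yields precisely $(-1)^{\alpha\cdot x}$. Beyond this bookkeeping, and checking that $G$ is indeed a probability/positive measure on the finite dual (so that the expectation is well defined), no further analysis is needed; the result follows by direct substitution.
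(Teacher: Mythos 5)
Your proposal is correct and follows essentially the same route as the paper: apply the spectral representation of MMD (\cref{lemma: spectral_MMD}) on the group $(\mathbb{Z}/2\mathbb{Z})^n$ and identify $\langle Z_\alpha\rangle_p=\phi_p(\alpha)$, $\langle Z_\alpha\rangle_{q_\theta}=\phi_{q_\theta}(\alpha)$, after which the claimed identity is immediate. Your extra care about the Boolean dual group, the real-valued characters $(-1)^{\alpha\cdot x}$, and the normalization of $G$ only makes explicit the bookkeeping the paper leaves implicit.
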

The above formulation generalizes the results in \cite{recio2025train}:
\begin{example}
    The \cref{lemma: MMD spectral representation} suggests that $G$ and $k$ are related via the Fourier transform. If we apply it to the Gaussian kernel, we obtain the following spectral mass:
    \begin{equation}
        G(\alpha) = \frac{1}{2^n} \sum_{b \in \{0,1\}^n} (-1)^{b \cdot \alpha} \exp(-\|b\|^2/2\sigma^2) = (1 - p_\sigma)^{n - |\alpha|} p_\sigma^{|\alpha|}
    \end{equation}
    which exactly coincide with \cref{eq: Gaussian spectral_measure} with the same $p_\sigma$.
\end{example}

\begin{remark}[Efficiency of generalized MMD estimation]
    For the IQP-QCBM model, it is possible to construct an unbiased estimator of $\mathrm{MMD}^2$ and its gradients w.r.t. $\theta$ using an efficient classical algorithm. This is because (1) the expectation value $\expval{Z_{\alpha}}_{q_\theta}$  can be efficiently estimated~\cref{lemma: efficient Pauli-Z estimation IQP}, and (2) $\mathrm{MMD}^2$ in Eq.~\eqref{eq: MMD general kernel} is a probabilistic mixture over these expectation values, which can be estimated efficiently by sampling bitstrings $\alpha\in\{0,1\}^n$ from the measure $G$.
\end{remark}

\subsection{Adversarial training with adaptive kernel}
To leverage adaptive, task-dependent kernels in MMD based on generalized representations, we require kernels to be stationary and bounded. However, meaningful training necessitates an additional constraint---the kernel must be \textit{characteristic}. A kernel $k$ over topological space $\mathcal{X}$ is characteristic if for any probability measures $\mu$ and $\nu$ in $\mathcal{X}$, the map $\mu \mapsto \int_{\mathcal{X}} k(x, \cdot) d\mu(x)$ is injective. This ensures distinct distributions yield unique kernel mean embeddings.

Being characteristic is essential for $\mathrm{MMD}$ loss, as it guarantees $\mathrm{MMD}^2(p, q) = 0$ \textit{if and only if} $p = q$~\citep{muandet2017kernel}. For binary spaces $\{0,1\}^n$, a kernel $k(t)$ is characteristic if and only if its spectral measure $G(\alpha)$ has full support ($\operatorname{supp}(G) = \{0,1\}^n$), meaning $G(\alpha) > 0$ for all $\alpha \in \{0,1\}^n$, see \cite{fukumizu2008characteristic}.
Instead of parameterizing the kernel, we model its spectral measure $G_\gamma$ using a critic network with parameters $\gamma$. This leads to an adversarial training scheme for IQP-QCBMs, analogous to generative adversarial networks~\cite{goodfellow2020generative}. We define the loss:
\begin{align}\label{eq: max MMD loss}
    \mathcal{L}(p, q_\theta)  &\coloneqq \max_{\gamma} \operatorname{MMD}_{G_\gamma}^2(p, q_\theta) 
\end{align}
with the constrained optimization:
\begin{equation}
    \begin{aligned}
        \min_{\theta} \,\mathcal{L}(\theta)\; \text{ s.t. } \sum_{\alpha \in \{0,1\}^n}G_\gamma(\alpha) = 1 \text{ and }G_\gamma(\alpha) > 0 \quad \forall \alpha \nonumber \\
    \end{aligned}
\end{equation}

In this min-max optimization setup, the critic $G_\gamma$ learns to identify the bistrings at which $\expval{Z_\alpha}_p$ and $\expval{Z_\alpha}_{q_\theta}$ differ most by adjusting $G_\gamma(\alpha)$. Unlike the fixed-kernel training procedure proposed in \cite{recio2025train}, in our approach, we assign higher importance to the $\alpha$ values that are most relevant to the task of interest. This adaptive weighting intuitively suggests improved performance by focusing learning capacity on the most informative spectral features.

For adaptive kernel loss~\cref{eq: max MMD loss}, it has been shown that the convergence in the loss value is equivalent to the convergence in distribution of probability measures for classical learning algorithms in real spaces~\citep{simon2023metrizing}. Here, we specialize this argument for IQP-QCBM.
\begin{lemma}[Consistency with weak convergence]\label{lemma: consistency with weak convergence}
Let $\{q_t\}_{t=1}^\infty$ be a sequence of probability distributions on $\{0,1\}^n$, and let $p$ be a fixed distribution on $\{0,1\}^n$. Let $G_\gamma: \{0,1\}^n \to (0,1)$ be any parametrized spectral measure with full support. Then the following equivalence holds:
\begin{equation}
\lim_{t \to \infty} \mathcal{L}(q_t, p) = 0 \iff q_t \overset{\text{d}}{\longrightarrow} p.
\end{equation}
where $\mathcal{L}$ is specified by \cref{eq: max MMD loss}.
\end{lemma}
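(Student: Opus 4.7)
The plan is to exploit the finiteness of the sample space $\{0,1\}^n$: weak convergence $q_t \overset{\text{d}}{\longrightarrow} p$ is then equivalent to pointwise convergence of the probability mass functions, which by discrete Fourier inversion on $(\mathbb{Z}/2)^n$ is in turn equivalent to $\langle Z_\alpha\rangle_{q_t}\to \langle Z_\alpha\rangle_{p}$ for every $\alpha\in\{0,1\}^n$. The whole lemma therefore reduces to showing that $\mathcal{L}(q_t,p)\to 0$ is equivalent to this coefficient-wise convergence of the $2^n$ Fourier coefficients, which can be read off directly from the spectral representation of MMD in \cref{thm:generalized}.

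For the ``$\Leftarrow$'' direction, $q_t\overset{\text{d}}{\longrightarrow}p$ gives $\langle Z_\alpha\rangle_{q_t}-\langle Z_\alpha\rangle_p\to 0$ for each of the finitely many $\alpha$, so
\[
M_t\;:=\;\max_{\alpha\in\{0,1\}^n}\bigl(\langle Z_\alpha\rangle_{q_t}-\langle Z_\alpha\rangle_p\bigr)^2 \;\longrightarrow\; 0.
\]
For any admissible $\gamma$, $G_\gamma$ is a probability measure on $\{0,1\}^n$, so by \cref{eq: MMD general kernel}, $\operatorname{MMD}^2_{G_\gamma}(q_t,p)\le M_t$. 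This bound is uniform in $\gamma$, so taking the $\max$ over $\gamma$ yields $\mathcal{L}(q_t,p)\le M_t\to 0$.

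For the ``$\Rightarrow$'' direction, I would fix any single admissible parameter $\gamma_0$, which exists because the parametrization realises full-support measures by hypothesis. Then $G_{\gamma_0}(\alpha)>0$ for every $\alpha$ and
\[
\mathcal{L}(q_t,p)\;\ge\;\operatorname{MMD}^2_{G_{\gamma_0}}(q_t,p) \;=\; \sum_{\alpha \in \{0,1\}^n}G_{\gamma_0}(\alpha)\bigl(\langle Z_\alpha\rangle_{q_t}-\langle Z_\alpha\rangle_p\bigr)^2 \;\ge\; 0.
\]
Since $\mathcal{L}(q_t,p)\to 0$, a squeeze argument forces this finite sum of non-negative terms to vanish; strict positivity of each weight $G_{\gamma_0}(\alpha)$ then forces every summand to vanish, so $\langle Z_\alpha\rangle_{q_t}\to \langle Z_\alpha\rangle_p$ for every $\alpha$. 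Inverting the discrete Fourier transform on $(\mathbb{Z}/2)^n$ yields $q_t(x)\to p(x)$ pointwise, which on the finite space $\{0,1\}^n$ is equivalent to $q_t\overset{\text{d}}{\longrightarrow}p$.

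The main (mild) subtlety lies in the forward direction, where I need at least one admissible $\gamma_0$ realising a strictly positive spectral measure; this is exactly what the full-support hypothesis on the parametrization supplies. No tightness, Portmanteau machinery, or metrizability-of-weak-convergence argument in the style of \citep{simon2023metrizing} is required, because the finiteness of $\{0,1\}^n$ makes the reverse-direction bound automatically uniform in $\gamma$ and collapses the technical difficulties that otherwise arise for continuous sample spaces.
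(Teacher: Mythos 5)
Your proposal is correct, and it reaches the same destination as the paper by a slightly different (and in one respect cleaner) route. The paper argues quantitatively: in the ``$\Leftarrow$'' direction it converts weak convergence to TVD convergence and bounds the loss via Parseval plus Cauchy--Schwarz, $\mathcal{L}(q_t,p)\le 4\cdot 2^n\,\mathrm{TVD}(q_t,p)^2\max_{\gamma,\alpha}G_\gamma(\alpha)$, and in the ``$\Rightarrow$'' direction it invokes a constant $c>0$ with $G_\gamma(\alpha)\ge c$ \emph{for all} $\gamma$ and $\alpha$, then runs Parseval and Cauchy--Schwarz backwards to get $\mathrm{TVD}\to 0$. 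You instead work coefficient-wise: finiteness of $\{0,1\}^n$ makes weak convergence equivalent to pointwise pmf convergence, hence (by finite Fourier inversion on $(\mathbb{Z}/2)^n$) to convergence of all $2^n$ expectations $\langle Z_\alpha\rangle_{q_t}$, and the spectral form of \cref{thm:generalized} sandwiches $\mathcal{L}$ between the value at one fixed $\gamma_0$ and the max squared coefficient gap $M_t$, uniformly in $\gamma$ because each $G_\gamma$ is a probability measure. Your forward direction is actually more careful than the paper's: since $\mathcal{L}$ is a supremum over $\gamma$, a single admissible $\gamma_0$ with full support suffices, whereas the paper's uniform lower bound $c$ over \emph{all} $\gamma$ is not literally implied by the stated hypothesis (full support of each individual $G_\gamma$) and is anyway unnecessary. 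What you give up relative to the paper is the explicit quantitative relation between $\mathcal{L}$ and $\mathrm{TVD}$ (with its $2^n$ factors), which the paper's Parseval/Cauchy--Schwarz chain provides and which could be reused for rate statements; for the stated qualitative equivalence, your argument is complete.
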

See the proof in \cref{proof: consistency with weak convergence}. The lemma establishes a precise connection between the minimization of the trainable kernel MMD loss and the weak convergence of probability distributions: it tells us that if we imagine an idealized training process with an infinite number of steps, where at each step the model distribution $q_t$ is adjusted to reduce the MMD loss $\mathcal{L}(q_t, p)$ towards zero, then this iterative minimization guarantees that $q_t$ will eventually converge to $p$ in distribution. This means the MMD loss with adaptive kernel is not just a heuristic—it is \textit{consistent} with weak convergence.

\section{Limitations of the MMD loss} \label{seq: MMD limitations}
We now turn to the limitations of MMD-based loss functions for generative modeling. As shown in Lemma~\ref{lemma: kernel choice matters}, the choice of kernel is crucial: by emphasizing differences at critical frequencies in the spectral representation of MMD, an appropriate kernel can substantially influence training dynamics. This insight motivates adaptive kernel training. Nevertheless, kernel adaptivity is not a panacea and comes with its own limitations. In particular, we prove that there exist distributions for which MMD training fails for \emph{any} choice of kernel
\begin{lemma}[Worst-case distributions]
\label{lemma: mmd limitations}
There exist an absolute constant $b>0$ and an $n_0 \in \mathbb{N}$ such that, for every $n \geq n_0$, there are distributions $p$ and $q$ on \(\{0,1\}^n\) with
\begin{equation}
    \mathrm{TVD}(p,q)=1,
\end{equation}
and, for any bounded characteristic kernel with spectral measure $G$,
\begin{equation}
    \operatorname{MMD}_G(p,q) \le 2^{-b n}.
\end{equation}
\end{lemma}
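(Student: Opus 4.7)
The strategy is to invoke the spectral representation of MMD (\cref{lemma: MMD spectral representation}) together with a probabilistic construction of $p,q$ whose characteristic functions differ by only an exponentially small amount uniformly in the frequency. Since $\sum_{\alpha}G(\alpha)=k(0,0)$ is bounded for any bounded kernel,
\[
\operatorname{MMD}_G^2(p,q)=\sum_{\alpha}G(\alpha)\,|\phi_p(\alpha)-\phi_q(\alpha)|^2\;\le\; k(0,0)\cdot\max_{\alpha}|\phi_p(\alpha)-\phi_q(\alpha)|^2,
\]
so it suffices to exhibit $p,q$ with disjoint supports (giving $\mathrm{TVD}(p,q)=1$) and with $\max_{\alpha}|\phi_p(\alpha)-\phi_q(\alpha)|^2\in\mathcal{O}(n\cdot 2^{-n})$. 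The claim $\operatorname{MMD}_G(p,q)\le 2^{-bn}$ then follows for any fixed $b<1/2$ once $n$ is large enough to absorb the $\sqrt{n}$ prefactor and the (normalized) kernel-dependent constant.

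For the construction I would use the probabilistic method: let $A\subset\{0,1\}^n$ be drawn uniformly at random among subsets of size $2^{n-1}$, and set $p=\operatorname{Unif}(A)$, $q=\operatorname{Unif}(\{0,1\}^n\setminus A)$. The supports are disjoint by construction, so $\mathrm{TVD}(p,q)=1$ deterministically. At $\alpha=0$ both characters equal $1$ and cancel; for $\alpha\neq 0$, the identity $\sum_{x\in\{0,1\}^n}(-1)^{\alpha\cdot x}=0$ yields $\phi_q(\alpha)=-\phi_p(\alpha)$, hence $|\phi_p(\alpha)-\phi_q(\alpha)|=|S_\alpha|/2^{n-2}$, where $S_\alpha := \sum_{x\in A}(-1)^{\alpha\cdot x}$. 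The variable $S_\alpha$ is a centered sum of $2^{n-1}$ values in $\{\pm 1\}$ sampled without replacement from a balanced population of size $2^n$, so Hoeffding's inequality for sampling without replacement gives $\Pr[|S_\alpha|>t]\le 2\exp(-t^2/2^n)$. Choosing $t=C\sqrt{n\cdot 2^n}$ with any $C^2>\ln 2$ and taking a union bound over the $2^n-1$ nontrivial frequencies yields, with positive probability for all $n$ large enough, $\max_{\alpha\neq 0}|S_\alpha|\le C\sqrt{n\cdot 2^n}$, so that $\max_{\alpha}|\phi_p(\alpha)-\phi_q(\alpha)|\in\mathcal{O}(\sqrt{n}\cdot 2^{-n/2})$. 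Substituting into the spectral bound completes the argument.

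\textbf{Main obstacle.} The delicate point is the interplay between the union bound, which costs $2^n$, and the per-frequency concentration tail, which must beat $e^{-n\ln 2}$; this is precisely what forces the deviation scale $t$ to grow as $\sqrt{n\cdot 2^n}$, and the resulting $\sqrt{n}$ polynomial prefactor is the reason the lemma only asserts some $b<1/2$ rather than $b=1/2$. A minor technicality is that because $|A|$ is fixed rather than binomially distributed, the indicators $\mathbf{1}_{x\in A}$ are dependent and one must invoke the Hoeffding-type inequality for sampling without replacement in place of the i.i.d.\ version; this is standard and changes only universal constants. Everything else is bookkeeping.
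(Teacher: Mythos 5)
Your proposal is correct and follows essentially the same route as the paper's own proof: a probabilistic construction with a random subset $A$ and the complementary uniform distributions $p=\operatorname{Unif}(A)$, $q=\operatorname{Unif}(A^c)$ (giving $\mathrm{TVD}=1$), Hoeffding's inequality for sampling without replacement plus a union bound over the $2^n-1$ nonzero frequencies, and the spectral bound $\operatorname{MMD}_G^2\le \sup_{\alpha\neq 0}|\phi_p(\alpha)-\phi_q(\alpha)|^2$ (up to the normalization of $G$). The only differences are parameter choices: you take $|A|=2^{n-1}$ so that $\phi_q=-\phi_p$ and push the concentration to the near-optimal scale, yielding any $b<1/2$, whereas the paper takes $|A|=\lfloor 2^n/3\rfloor$ and the cruder threshold $2^{-n/4}$, yielding $b\approx 1/4$ --- both suffice since the lemma only asserts some absolute $b>0$.
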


\begin{proof}[Proof sketch]
The proof is based on the existence (we prove this rigorously in~\cref{proof: mmd limitations supplementary}) of distributions with exponentially decaying characteristic functions for all $\alpha \neq \mathbf{0}$: $\left|\phi_{p}(\alpha)\right| \leq 2^{-c_1 n}$ and $\left|\phi_{q}(\alpha)\right| \leq 2^{-c_2 n}$ for fixed $c_1, c_2 > 0$. Then, for any bounded characteristic kernel with spectral measure $G$, the corresponding MMD can be bounded above as
\begin{equation}
    \operatorname{MMD}_G^2\left(p, q\right) \leq \sup_{\alpha \neq \mathbf{0}} \left|\phi_{p}(\alpha) - \phi_{q}(\alpha)\right|^2 \leq 2^{-2n\min(c_1, c_2)}.
\end{equation}
\end{proof}
The above lemma also implies that for the worst-case distribution, the MMD's gradient w.r.t. the generator $\nabla_\theta \operatorname{MMD}_{G}^2$ is also exponentially vanishing ($\expval{Z_\alpha}_{q_\theta}$ is Lipschitz continuous w.r.t. to $\theta$), which again implies the kernel-based training becomes ineffective. This is a general limitation of the MMD metric, regardless of whether the generative model for training is classical or quantum. This issue is not unique to the IQP-QCBM model but rather a reflection of MMD's general insensitivity to certain distributions that can be artificially constructed. However, in practical scenarios, many real-world distributions can still be meaningfully distinguished using MMD with the adaptive training procedure.

\section{Numerical Experiments}\label{seq: numerics}
In this section, we present numerical experiments to evaluate the proposed kernel-adaptive training, employing synthetic datasets designed to illustrate scenarios where our method is beneficial.

\paragraph{Dataset} We construct each synthetic dataset $\mathcal{D}$ by drawing samples from the null space of a parity-check matrix $H \in \{0,1\}^{K \times n}$, i.e., for all $x \in \mathcal{D}$ and all $1 \leq k \leq K$, it holds that $\sum_{i=1}^n x_i h^{(k)}_i = 0 \pmod{2}$, where $h^{(k)}$ denotes the $k$-th row of $H$. The characteristic function of the underlying distribution $p$ of such a dataset differs from that of the uniform distribution $\mathcal{U}(\{0,1\}^n)$ only at certain frequencies, namely the vectors $h^{(k)}$ and their XOR combinations. We construct such datasets over $\{0,1\}^n$ with increasing $n = 12, 14, 16$, resulting in progressively harder tasks. Each dataset contains $2000$ training samples and $50000$ test samples.
To challenge Gaussian kernels, we construct datasets whose distributions have only three non-trivial frequencies at high Hamming weight, $\lvert \alpha \rvert > \frac{n}{2}$. It can be shown that, for any fixed bandwidth $\sigma \geq 0$, the squared Gaussian kernel MMD decays exponentially with the model size $n$. Assuming that at initialization the generated distribution $q_\theta$ is close to $\mathcal{U}(\{0,1\}^n)$, we expect both the magnitude of the Gaussian MMD loss and its gradient to vanish exponentially.

\begin{figure*}[tb]
\centering
\includegraphics[width=\textwidth]{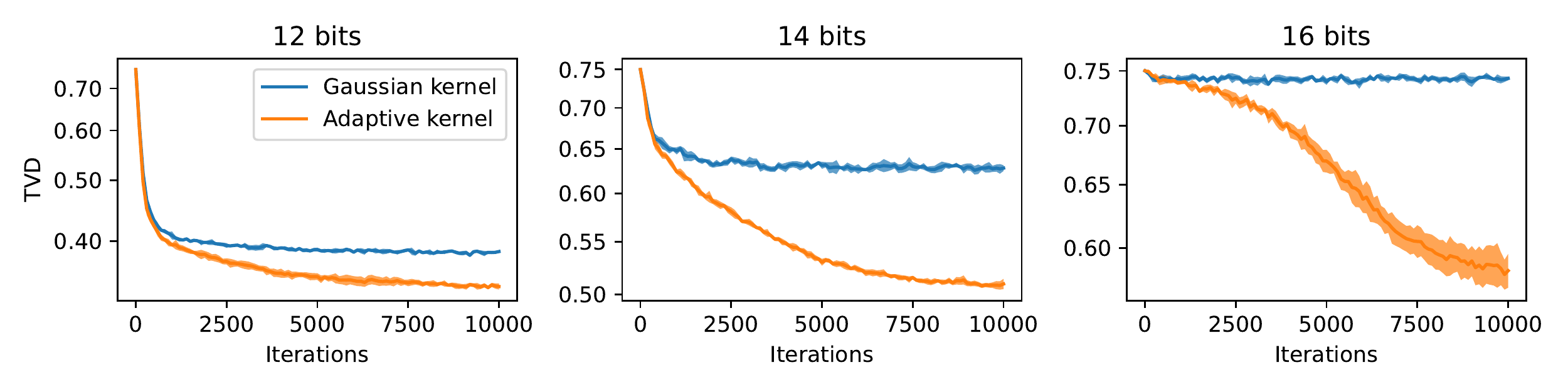}
\begin{tabular*}{0.9\linewidth}{@{\extracolsep{\fill}} lccc @{}}
    \toprule
    & \textit{12 bits} & \textit{14 bits} & \textit{16 bits} \\
    \midrule
    Gaussian kernel & 0.378$\pm$0.001 & 0.620$\pm$0.001 & 0.733$\pm$0.002 \\
    Adaptive kernel & \textbf{0.336$\pm$0.002} & \textbf{0.506$\pm$0.002} & \textbf{0.580$\pm$0.010} \\
    \bottomrule
  \end{tabular*}
\caption{Total variation distance between generator distribution and ground truth distribution, when trained with different kernels (mean $\pm$ standard deviation over $5$ runs) for 12-, 14-, and 16-bit synthetic parity-check datasets. Lowest achieved values are reported in the table. Bold indicates the best (minimal) TVD with statistical significance level of $0.01$.}
\label{fg: result}
\end{figure*}

\paragraph{Spectral Measure Parameterization} 
For our experiments, we employ a specific parametrization of the spectral measure $G_{\gamma}$ using a classic yet effective autoregressive model, the Fully Visible Sigmoid Belief Network (FVSBN) \cite{Neal1992, FreyDayanHinton1996}. More advanced architectures, such as NADE \cite{LarochelleMurray2011}, MADE \cite{Germain2015}, and Discrete Flows \cite{Tran2019DiscreteFlows}, could also be considered as alternatives. The FVSBN models a factorized joint distribution over binary variables as follows:
\begin{equation}
    G_{\gamma}({\alpha}) = \prod_{i=1}^n \bigl[p_{\gamma}(\alpha_i=1\mid{\alpha}_{<i})\bigr]^{\alpha_i} \bigl[1 - p_{\gamma}(\alpha_i=1\mid{\alpha}_{<i})\bigr]^{1 - \alpha_i}
\end{equation}
where the probability of each bit $\alpha_i$ is recursively computed from the previously sampled bits $\alpha_{<i}:=(\alpha_1,\dots,\alpha_{i-1})$. Specifically, the conditional probability is given by

\begin{equation}
    p_{\gamma}\!\left(\alpha_i = 1 \,\big|\, \alpha_{<i}\right) = \varepsilon + (1 - 2\varepsilon)\operatorname{sigmoid} \bigl(b_i + \sum_{r < i} W_{ir} (2\alpha_r - 1)\bigr)
\end{equation}
where the model is parameterized with parameters $\gamma = (W, b)$: $W\in\mathbb{R}^{n\times n}$ is a lower-triangular weight matrix, $b\in\mathbb{R}^n$ is a bias term. The small $\varepsilon=10^{-6}$ value is to ensure numerical stability. \\
FVSBN allows for both ancestral sampling of frequencies $\alpha$ and access to their log-probabilities $\log G_\gamma(\alpha) $, as required to compute the MMD loss gradients w.r.t. the parameters $\gamma$. Using the log-derivative trick, one can show that $\nabla_\gamma \widehat{\operatorname{MMD}}_{G_\gamma}^2(p, q_\theta)$ can be estimated as:
\begin{equation}
    \nabla_\gamma \widehat{\operatorname{MMD}}_{G_\gamma}^2(p, q_\theta) \approx \frac{1}{K}\sum_{k=1}^{K} \nabla_\gamma \log G_\gamma(\alpha_k) 
   \left(\langle Z_{\alpha_k} \rangle_p - \langle Z_{\alpha_k} \rangle_{q_\theta}\right)^2
\end{equation}
where $\alpha_k$ are frequencies sampled from $G_\gamma(\alpha)$. 

\paragraph{Results} For each synthetic parity-check dataset, we use the same IQP generator, which includes all gates acting on up to six qubits. Its parameters are initialized to match the data covariance, as described in \cite{recio2025train}. We compare our adaptive training to a fixed Gaussian kernel with bandwidth $\sigma=10^{-6}$ (approximately a uniform spectral measure). We tuned $\sigma$ as a hyperparameter and found that larger values degrade performance. For a fair comparison, $G_{\gamma}$ is initialized to the spectral measure of the Gaussian kernel with $\sigma=10^{-6}$. For the adaptive training and the fixed Gaussian kernel, we use the same IQP-QCBM generator. The model's performance is assessed by the total variation distance (TVD) observed on the test sample, measured in every 100 training iterations. 
The results are shown in \cref{fg: result}. We clearly see that for all three problem sizes, the adaptive training outperforms the Gaussian training. Furthermore, the performance gap widens at larger bit counts. For $16$ bits, while the Gaussian kernel training stagnates, the IQP generator improves and converges under adaptive training. 
As dimensionality increases, each minibatch covers an exponentially smaller fraction of the $2^{n}$ frequency space. Hence, a zero-initialised FVSBN draws the informative high-weight frequencies more rarely, which is visible in its slower convergence. 
To verify that our adaptive training generally outperforms Gaussian kernels regardless of the bandwidth, we repeated the training with different bandwidths $\sigma > 0$, or allowed the bandwidth to be dynamically updated during training. All these runs yield worse results than the $\sigma=0$ cases, which is expected from the special construction of the datasets. 

\section{Discussion and Conclusion}\label{seq: discussion}
In this work, we analyzed the prospects and limitations of the IQP-QCBMs as quantum generative models. In particular, we addressed two limitations. First, we established the universality of IQP-QCBM when hidden qubits are added, providing two proofs: a simple asymptotic version and an exact construction where adding $n + 1$ hidden qubits makes the $n$-qubit IQP generator universal. However, we emphasize that this construction serves primarily theoretical purposes and should not be interpreted as recommending large numbers of hidden units in practice. Our result instead illuminates hidden units and diagonal operations as architectural controls for balancing expressivity and trainability.

Second, we introduced an adversarial kernel-adaptive training method, which is classically efficient for IQP-QCBM. As we showed, this method addresses training issues in fixed-kernel MMD approaches. We provide theoretical guarantees that MMD convergence implies convergence in distribution under adaptive training. Experimental validation on the parity-check dataset demonstrates that this adaptive approach consistently reduces total variation distance compared to fixed Gaussian kernels, with performance gaps increasing as qubit count grows. While adaptive kernels can resolve training issues, we note that MMD-based training retains inherent limitations.

For future works, we plan to understand lower bounds on qubits required for hidden-model universality, implement the method for hidden-qubit architectures to quantify expressivity-trainability tradeoffs, and test kernel-adaptive training on more real-world datasets against fixed-kernel baselines.

\section{Acknowledgments}
AK thanks Yash Patel for insightful discussions. VD, HW, and AK acknowledge the support from the Dutch National Growth Fund (NGF), as part of the Quantum Delta NL programme. VD acknowledges support from the Dutch Research Council (NWO/OCW), as part of the Quantum Software Consortium programme (project number 024.003.03). This project was also co-funded by the European Union (ERC CoG, BeMAIQuantum, 101124342).

\bibliographystyle{alpha}  
\bibliography{refs}

\newcommand{\etalchar}[1]{$^{#1}$}
\begin{thebibliography}{GPAM{\etalchar{+}}20}

\bibitem[ASBG18]{arbel2018gradient}
Michael Arbel, Danica~J Sutherland, Miko{\l}aj Bi{\'n}kowski, and Arthur Gretton.
\newblock On gradient regularizers for mmd gans.
\newblock {\em Advances in neural information processing systems}, 31, 2018.

\bibitem[BGPP{\etalchar{+}}19]{benedetti2019generative}
Marcello Benedetti, Delfina Garcia-Pintos, Oscar Perdomo, Vicente Leyton-Ortega, Yunseong Nam, and Alejandro Perdomo-Ortiz.
\newblock A generative modeling approach for benchmarking and training shallow quantum circuits.
\newblock {\em npj Quantum information}, 5(1):45, 2019.

\bibitem[BGV{\etalchar{+}}25]{BartheEtAl2025_PQC_universal_generative}
Alice Barthe, Michele Grossi, Sofia Vallecorsa, Jordi Tura, and Vedran Dunjko.
\newblock Parameterized quantum circuits as universal generative models for continuous multivariate distributions.
\newblock {\em npj Quantum Information}, 2025.

\bibitem[BJS10]{Bremner_2010}
Michael~J. Bremner, Richard Jozsa, and Dan~J. Shepherd.
\newblock Classical simulation of commuting quantum computations implies collapse of the polynomial hierarchy.
\newblock {\em Proceedings of the Royal Society A: Mathematical, Physical and Engineering Sciences}, 467(2126):459–472, August 2010.

\bibitem[BMS16]{Bremner_2016}
Michael~J. Bremner, Ashley Montanaro, and Dan~J. Shepherd.
\newblock Average-case complexity versus approximate simulation of commuting quantum computations.
\newblock {\em Physical Review Letters}, 117(8), August 2016.

\bibitem[Boc33]{bochner1933monotone}
Salomon Bochner.
\newblock Monotone funktionen, stieltjessche integrale und harmonische analyse.
\newblock {\em Mathematische Annalen}, 108(1):378--410, 1933.

\bibitem[DHLT20]{DuEtAl2020_PQC_expressive_power}
Yuxuan Du, Min{-}Hsiu Hsieh, Tongliang Liu, and Dacheng Tao.
\newblock Expressive power of parametrized quantum circuits.
\newblock {\em Physical Review Research}, 2(033125), 2020.

\bibitem[DK18]{DallaireDemersKilloran2018_QGAN}
Pierre{-}Luc Dallaire{-}Demers and Nathan Killoran.
\newblock Quantum generative adversarial networks.
\newblock {\em Physical Review A}, 98(012324), 2018.

\bibitem[FDH97]{FreyDayanHinton1996}
Brendan~J. Frey, Peter Dayan, and Geoffrey~E. Hinton.
\newblock Variational learning in nonlinear gaussian belief networks.
\newblock In Michael~C. Mozer, Michael~I. Jordan, and Thomas Petsche, editors, {\em Advances in Neural Information Processing Systems 9}, pages 479--486. MIT Press, 1997.

\bibitem[FGSS08]{fukumizu2008characteristic}
Kenji Fukumizu, Arthur Gretton, Bernhard Sch{\"o}lkopf, and Bharath~K Sriperumbudur.
\newblock Characteristic kernels on groups and semigroups.
\newblock {\em Advances in neural information processing systems}, 21, 2008.

\bibitem[GD23]{gyurik2023exponential}
Casper Gyurik and Vedran Dunjko.
\newblock Exponential separations between classical and quantum learners.
\newblock {\em arXiv preprint arXiv:2306.16028}, 2023.

\bibitem[GGML15]{Germain2015}
Mathieu Germain, Karol Gregor, Iain Murray, and Hugo Larochelle.
\newblock {MADE}: Masked autoencoder for distribution estimation.
\newblock In {\em ICML 2015}, pages 881--889, 2015.

\bibitem[GPAM{\etalchar{+}}20]{goodfellow2020generative}
Ian Goodfellow, Jean Pouget-Abadie, Mehdi Mirza, Bing Xu, David Warde-Farley, Sherjil Ozair, Aaron Courville, and Yoshua Bengio.
\newblock Generative adversarial networks.
\newblock {\em Communications of the ACM}, 63(11):139--144, 2020.

\bibitem[LAT21]{liu2021rigorous}
Yunchao Liu, Srinivasan Arunachalam, and Kristan Temme.
\newblock A rigorous and robust quantum speed-up in supervised machine learning.
\newblock {\em Nature Physics}, 17(9):1013--1017, 2021.

\bibitem[LBC17]{LiuBousquetChaudhuri2017_GAN_convergence}
Shuang Liu, Olivier Bousquet, and Kamalika Chaudhuri.
\newblock Approximation and convergence properties of generative adversarial learning.
\newblock In {\em Advances in Neural Information Processing Systems (NeurIPS)}, 2017.

\bibitem[LCC{\etalchar{+}}17]{li2017mmd}
Chun-Liang Li, Wei-Cheng Chang, Yu~Cheng, Yiming Yang, and Barnab{\'a}s P{\'o}czos.
\newblock Mmd gan: Towards deeper understanding of moment matching network.
\newblock {\em Advances in neural information processing systems}, 30, 2017.

\bibitem[LCM{\etalchar{+}}19]{li2019implicit}
Chun-Liang Li, Wei-Cheng Chang, Youssef Mroueh, Yiming Yang, and Barnabas Poczos.
\newblock Implicit kernel learning.
\newblock In {\em The 22nd international conference on artificial intelligence and statistics}, pages 2007--2016. PMLR, 2019.

\bibitem[LL20]{LuLu2020_UA_distributions}
Yulong Lu and Jianfeng Lu.
\newblock A universal approximation theorem of deep neural networks for expressing probability distributions.
\newblock In {\em Advances in Neural Information Processing Systems (NeurIPS)}, 2020.

\bibitem[LLT23]{LeeLuTan2023_SGM_convergence}
Holden Lee, Jianfeng Lu, and Yixin Tan.
\newblock Convergence of score-based generative modeling for general data distributions.
\newblock In {\em Algorithmic Learning Theory (ALT)}, volume 201 of {\em Proceedings of Machine Learning Research}, pages 1--40, 2023.

\bibitem[LM11]{LarochelleMurray2011}
Hugo Larochelle and Iain Murray.
\newblock The neural autoregressive distribution estimator.
\newblock In {\em Proceedings of the 14th International Conference on Artificial Intelligence and Statistics}, pages 29--37, 2011.

\bibitem[LW18]{liu2018differentiable}
Jin-Guo Liu and Lei Wang.
\newblock Differentiable learning of quantum circuit born machines.
\newblock {\em Physical Review A}, 98(6):062324, 2018.

\bibitem[MA11]{MontufarAy2011_RBM_DBN_refinements}
Guido Mont{\'u}far and Nihat Ay.
\newblock Refinements of universal approximation results for deep belief networks and restricted boltzmann machines.
\newblock {\em Neural Computation}, 23(5):1306--1319, 2011.

\bibitem[MAD24]{marshall2024improved}
Simon~C Marshall, Scott Aaronson, and Vedran Dunjko.
\newblock Improved separation between quantum and classical computers for sampling and functional tasks.
\newblock {\em arXiv preprint arXiv:2410.20935}, 2024.

\bibitem[MFS{\etalchar{+}}17]{muandet2017kernel}
Krikamol Muandet, Kenji Fukumizu, Bharath Sriperumbudur, Bernhard Sch{\"o}lkopf, et~al.
\newblock Kernel mean embedding of distributions: A review and beyond.
\newblock {\em Foundations and Trends{\textregistered} in Machine Learning}, 10(1-2):1--141, 2017.

\bibitem[MN21]{mroueh2021convergence}
Youssef Mroueh and Truyen Nguyen.
\newblock On the convergence of gradient descent in gans: Mmd gan as a gradient flow.
\newblock In {\em International Conference on Artificial Intelligence and Statistics}, pages 1720--1728. PMLR, 2021.

\bibitem[Nea92]{Neal1992}
Radford~M. Neal.
\newblock Connectionist learning of belief networks.
\newblock {\em Artificial Intelligence}, 56(1--3):71--113, 1992.

\bibitem[Nes09]{nest2009simulating}
M~Nest.
\newblock Simulating quantum computers with probabilistic methods.
\newblock {\em arXiv preprint arXiv:0911.1624}, 2009.

\bibitem[NM14]{nakata2014diagonal}
Yoshifumi Nakata and Mio Murao.
\newblock Diagonal quantum circuits: their computational power and applications.
\newblock {\em The European Physical Journal Plus}, 129(7):152, 2014.

\bibitem[RAAB25]{recio2025train}
Erik Recio-Armengol, Shahnawaz Ahmed, and Joseph Bowles.
\newblock Train on classical, deploy on quantum: scaling generative quantum machine learning to a thousand qubits.
\newblock {\em arXiv preprint arXiv:2503.02934}, 2025.

\bibitem[RAB25]{recio2025iqpopt}
Erik Recio-Armengol and Joseph Bowles.
\newblock Iqpopt: Fast optimization of instantaneous quantum polynomial circuits in jax.
\newblock {\em arXiv preprint arXiv:2501.04776}, 2025.

\bibitem[RB08]{LeRouxBengio2008_RBM_UA}
Nicolas~Le Roux and Yoshua Bengio.
\newblock Representational power of restricted boltzmann machines and deep belief networks.
\newblock {\em Neural Computation}, 20(6):1631--1649, 2008.

\bibitem[RLT{\etalchar{+}}24]{rudolph2024trainability}
Manuel~S Rudolph, Sacha Lerch, Supanut Thanasilp, Oriel Kiss, Oxana Shaya, Sofia Vallecorsa, Michele Grossi, and Zo{\"e} Holmes.
\newblock Trainability barriers and opportunities in quantum generative modeling.
\newblock {\em npj Quantum Information}, 10(1):116, 2024.

\bibitem[SGBSM23]{simon2023metrizing}
Carl-Johann Simon-Gabriel, Alessandro Barp, Bernhard Sch{\"o}lkopf, and Lester Mackey.
\newblock Metrizing weak convergence with maximum mean discrepancies.
\newblock {\em Journal of Machine Learning Research}, 24(184):1--20, 2023.

\bibitem[SSDK{\etalchar{+}}21]{SongEtAl2021_ScoreSDE}
Yang Song, Jascha Sohl-Dickstein, Diederik~P. Kingma, Abhishek Kumar, Stefano Ermon, and Ben Poole.
\newblock Score-based generative modeling through stochastic differential equations.
\newblock In {\em International Conference on Learning Representations (ICLR)}, 2021.

\bibitem[TVA{\etalchar{+}}19]{Tran2019DiscreteFlows}
Dustin Tran, Keyon Vafa, Kumar Agrawal, Laurent Dinh, and Ben Poole.
\newblock Discrete flows: Invertible generative models of discrete data.
\newblock In {\em Advances in Neural Information Processing Systems 32}, pages 14719--14728, 2019.

\end{thebibliography}

\appendix
\section{Proof of \cref{lemma: asymptotic approximate universality}}\label{proof: asymptotic approximate universality}
\begin{proof}
After applying Hadamard gates to all $m+n$ qubits, followed by a diagonal phase operator $D(\theta)=\mathrm{diag}\!\big(e^{i\theta_{j,k}}\big)$ in the computational basis, the state we obtain is
\begin{equation*}
    \ket{\psi} = \frac{1}{\sqrt{2^m}} \sum_{j \in \{0,1\}^m} \ket{j}\,\ket{v(\theta_j)} .
\end{equation*}
where
\begin{equation*}
    \ket{v(\theta_j)} := \frac{1}{\sqrt{2^n}} \sum_{k \in \{0,1\}^n} e^{i\theta_{j,k}} \ket{k},
    \qquad
    \theta_j := \{\theta_{j,k}\}_{k \in \{0,1\}^n}.
\end{equation*}
By choosing $\theta_j \in \{0,\pi\}^n$, each $\ket{v(\theta_j)}$ can be made equal to any $X$-basis state $\ket{\tilde s}$. Consequently, applying another layer of Hadamards to the last $n$ qubits maps $\ket{\tilde s}$ to the corresponding computational-basis state $\ket{s}$; thus, by tuning the parameters we can obtain any $\ket{s}$ on those qubits. The marginal distribution on the last $n$ qubits is therefore the fraction of the $2^m$ hidden indices that output $s$:
\begin{equation*}
    q_\theta(s) = \frac{1}{2^m} \left| \left\{ j \in \{0,1\}^m \,\middle|\, H^{\otimes n}\ket{v(\theta_j)}=\ket{s} \right\} \right|.
\end{equation*}
Hence $q_\theta$ consists exactly of probabilities of the form $q_\theta(s)=c_s/2^m$ with nonnegative integers $c_s$ satisfying $\sum_s c_s=2^m$; conversely, any such choice is realizable by assigning exactly $c_s$ indices $j$ to output $s$.

It remains to bound the approximation error. Let $p=(p_1,\dots,p_{2^n})$, then we can always write each component as
\begin{equation*}
    p_j = \frac{i_j}{2^m} + \varepsilon_j,\qquad
i_j := \big\lfloor 2^m p_j \big\rfloor,\quad
\varepsilon_j \in \Big[0,\frac{1}{2^m}\Big).
\end{equation*}
Define the total rounding remainder $r := 2^m \sum_{j=1}^{2^n} \varepsilon_j \in \{0,1,\dots,2^n-1\}$. Form a valid grid distribution $q$ by adding $1/2^m$ to any $r$ components (e.g., the first $r$):
\begin{equation*}
    q_j =
    \begin{cases}
        \frac{i_j}{2^m} + \frac{1}{2^m}, & j\le r,\\
        \frac{i_j}{2^m}, & j>r.
    \end{cases}
\end{equation*}
Thus, we have $|p_j-q_j| \leq \frac{1}{2^m}$ for each $j$. So the total variation distance can be upper-bounded as follows:
\begin{equation*}
    \mathrm{TVD}(p,q) =\frac{1}{2} \sum_{j=1}^{2^n} |p_j-q_j| \leq \frac{1}{2} \cdot \frac{2^n}{2^m} \in\mathcal{O} \left(\frac{1}{2^{m-n}}\right).
\end{equation*}
Since every such $q$ is realizable as $q_\theta$ by the construction above, the claim follows.
\end{proof}

\section{Proof of \cref{thm: exact universality}}\label{proof: exact universality}
We prove several auxiliary lemmas first, before the proof of the main statement. 
\begin{lemma}\label{lemma: rho_b decomposition}
    Let the total quantum system consist of $m+n$ qubits, where the quantum state is prepared using an IQP circuit, specified by diagonal matrix $D(\theta) = \sum_{x \in \{0,1\}^m} \sum_{y \in \{0,1\}^n} e^{i\theta_{x, y}} |x\rangle \langle x| \otimes |y\rangle \langle y|$ with the final layer of Hadamard gates omitted. Then, the reduced density matrix $\rho_2$ of the second, $n$-qubit, subsystem is given by
    \begin{equation*}
        \rho_2 = \frac{1}{2^m} \sum_{k \in \{0,1\}^m} |\psi_k\rangle \langle \psi_k|,
    \end{equation*}
    where each state $|\psi_k\rangle$ takes the form
    \begin{equation*}
        |\psi_k\rangle = \frac{1}{\sqrt{2^n}} \sum_{y \in \{0,1\}^n} e^{i\theta_{k, y}} |y\rangle,\label{eq: psi}
    \end{equation*}
    and the phases $\theta_{k, y}$ are trainable parameters of the IQP circuit.
\end{lemma}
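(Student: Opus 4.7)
The plan is a direct calculation: propagate the $(m+n)$-qubit state through the two layers that are actually present (Hadamard, then $D(\theta)$, with no final Hadamards on the visible register by hypothesis), and then perform the partial trace in the computational basis. First I would apply $H^{\otimes(m+n)}$ to $\ket{0}^{\otimes(m+n)}$ to produce the uniform superposition over the joint register, and then observe that $D(\theta)$, being diagonal, simply multiplies each computational-basis term by the appropriate phase:
\[
\ket{\Psi} = \frac{1}{\sqrt{2^{m+n}}} \sum_{x\in\{0,1\}^m}\sum_{y\in\{0,1\}^n} e^{i\theta_{x,y}}\,\ket{x}\ket{y}.
\]

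Next comes the key bookkeeping step: regroup the double sum according to the hidden register as
\[
\ket{\Psi} \;=\; \frac{1}{\sqrt{2^m}} \sum_{x\in\{0,1\}^m} \ket{x}\otimes \ket{\psi_x}, \qquad \ket{\psi_x} \coloneqq \tfrac{1}{\sqrt{2^n}}\sum_{y\in\{0,1\}^n} e^{i\theta_{x,y}}\ket{y}.
\]
Each $\ket{\psi_x}$ is automatically normalized because its coefficients are unit-modulus phases. I would then remark that although the visible-register kets $\{\ket{\psi_x}\}_x$ are in general neither orthogonal nor linearly independent, the hidden-register kets $\{\ket{x}\}_x$ are orthonormal in the computational basis, which is all the partial trace step needs.

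The last step is to form $\ket{\Psi}\bra{\Psi} = \frac{1}{2^m}\sum_{x,x'}\ket{x}\bra{x'}\otimes \ket{\psi_x}\bra{\psi_{x'}}$ and take the partial trace over the hidden subsystem, using $\operatorname{Tr}\!\bigl(\ket{x}\bra{x'}\bigr)=\delta_{x,x'}$ to annihilate the off-diagonal cross terms. What survives is exactly $\rho_2 = \frac{1}{2^m}\sum_k \ket{\psi_k}\bra{\psi_k}$, which after relabeling $x \to k$ matches the claim.

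There is no real obstacle here: this is a Schmidt-style identity and the whole argument is a one-page verification. The only point worth flagging is the hypothesis that the final Hadamard layer on the visible qubits is omitted—had it been present, each $\ket{\psi_k}$ in the decomposition would be replaced by $H^{\otimes n}\ket{\psi_k}$, giving a slightly different (but equally explicit) mixture. The form stated in this lemma is precisely the one needed as the starting point of the proof sketch of \cref{thm: exact universality}, where the outer Hadamards are instead absorbed into the measurement basis via $\ket{\tilde b} = H^{\otimes n}\ket{b}$.
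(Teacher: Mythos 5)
Your proposal is correct and follows essentially the same route as the paper: a direct computation of the post-$H^{\otimes(m+n)}$, post-$D(\theta)$ state followed by a partial trace in the computational basis of the hidden register, where orthonormality of the hidden-register kets kills the cross terms (you merely regroup the pure state into the form $\frac{1}{\sqrt{2^m}}\sum_x \ket{x}\otimes\ket{\psi_x}$ before tracing, whereas the paper expands the full density matrix first---an immaterial difference). Your closing remark about the omitted final Hadamard layer being absorbed into the measurement basis is also consistent with how the lemma is used in the proof of the exact universality theorem.
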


\begin{proof}
The density matrix of the whole system after the layer of Hadamards and the diagonal operator can be written as: 
    \begin{align*}
        \rho_{1+2} &= D(\theta)\, H^{\otimes(m+n)} \bigl(|0\rangle\!\langle 0|\bigr)^{\otimes(m+n)} H^{\otimes(m+n)} D^{\dagger}(\theta) \notag \\
        &= \frac{1}{2^{m+n}}
        \sum_{x,x'\in\{0,1\}^{n}} \sum_{y,y'\in\{0,1\}^{m}}
        e^{i\bigl(\theta_{x,y}-\theta_{x',y'}\bigr)}
        \bigl(|x\rangle\!\langle x'| \otimes |y\rangle\!\langle y'|\bigr).
    \end{align*}
Reduced density matrix for the second subsystem:
\begin{align*}
    \rho_2 &= \sum_{k \in \{0,1\}^n} \frac{1}{2^{m+n}} \sum_{x,x'\in \{0,1\}^m} \sum_{y,y' \in \{0,1\}^n} e^{i(\theta_{x,y} - \theta_{x',y'})} \langle k|x\rangle \langle x'|k\rangle \otimes |y\rangle \langle y'| = \notag \\
    &= \sum_{y,y'\in \{0,1\}^n} \left(\frac{1}{2^{m+n}} \sum_{k \in \{0,1\}^m} e^{i(\theta_{k,y} - \theta_{k,y'})} \right) |y\rangle \langle y'| = \notag \\&= \frac{1}{2^m}\sum_{k \in \{0,1\}^m} \underbrace{\left(\frac{1}{\sqrt{2^n}}\sum_{y\in \{0,1\}^n} e^{i\theta_{k,y}}|y\rangle \right)}_{|\psi_k\rangle} \left(\frac{1}{\sqrt{2^n}}\sum_{y'\in \{0,1\}^n} e^{-i\theta_{k,y'}}\langle y'| \right) \notag \\&=  \frac{1}{2^m} \sum_{k \in \{0,1\}^m} |\psi_k\rangle \langle \psi_k|
\end{align*}
\end{proof}

\begin{lemma}[2-sparse distributions via IQP state]
\label{lemma:2-sparse-distribution-encoding}
Let $\tilde s_1,\tilde s_2$ be distinct $n$-qubit $X$-basis product states and let $p\in[0,1]$.
Then there exist phases $\{\theta_y\}_{y\in\{0,1\}^n}$ such that the IQP state $\ket{\psi} =\frac{1}{\sqrt{2^n}}\sum_{y\in\{0,1\}^n} e^{i\theta_y} \ket{y}$ satisfies
\begin{equation*}
    \left|\bra{\tilde s_1}\ket{\psi}\right|^2 = p, \qquad \left|\bra{\tilde s_2}\ket{\psi}\right|^2 = 1 - p.
\end{equation*}
\end{lemma}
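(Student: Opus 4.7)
The natural move is to pass to the Fourier (Hadamard) picture. Writing $\ket{\tilde s_i}=H^{\otimes n}\ket{s_i}$, one has $\langle \tilde s\ket{y}=\tfrac{1}{\sqrt{2^n}}(-1)^{s\cdot y}$, so
\begin{equation*}
    \langle \tilde s \ket{\psi} = \frac{1}{2^n}\sum_{y\in\{0,1\}^n} e^{i\theta_y}(-1)^{s\cdot y}.
\end{equation*}
Thus I must design a sequence of unit-modulus numbers $\{e^{i\theta_y}\}$ whose discrete Fourier transform is supported exactly on $\{s_1,s_2\}$ with prescribed squared magnitudes $p$ and $1-p$. Writing $d:=s_1\oplus s_2\neq 0$ (by distinctness), the task reduces to finding phases whose Fourier support is $\{s_1,s_1\oplus d\}$.

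The plan is to try the ansatz
\begin{equation*}
    e^{i\theta_y} \;=\; (-1)^{s_1\cdot y}\bigl[\cos\phi + i\sin\phi\,(-1)^{d\cdot y}\bigr],
\end{equation*}
for a free parameter $\phi\in[0,\pi/2]$. First I verify that this is a legitimate phase: the bracket is $\cos\phi\pm i\sin\phi=e^{\pm i\phi}$ depending on the parity of $d\cdot y$, hence has unit modulus, and the outer factor $(-1)^{s_1\cdot y}$ is a sign, so the product lies on the unit circle. Substituting into the formula above and using the standard character orthogonality $\tfrac{1}{2^n}\sum_y(-1)^{v\cdot y}=\delta_{v,0}$, the $\cos\phi$-term contributes only when $s\oplus s_1=0$ and the $\sin\phi$-term contributes only when $s\oplus s_1\oplus d=0$. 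Therefore
\begin{equation*}
    \langle \tilde s \ket{\psi} = \cos\phi\,\delta_{s,s_1} + i\sin\phi\,\delta_{s,s_2}.
\end{equation*}

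Setting $\phi=\arccos\sqrt{p}$ then yields $|\langle \tilde s_1 \ket{\psi}|^2=p$ and $|\langle \tilde s_2\ket{\psi}|^2=1-p$, as required. The only genuine point of care is the unit-modulus check for the bracket: one needs the relative coefficient between the two $\pm 1$-valued characters to be purely imaginary (so that $|\cos\phi+i\sin\phi(\pm 1)|^2=1$ identically in $y$); a real relative coefficient would fail, which is the main conceptual obstacle to the construction. Everything else is a one-line application of orthogonality of characters on $\{0,1\}^n$, and the construction is fully explicit, producing the desired $\theta_y\in\{\,s_1\cdot y\cdot\pi+\phi,\; s_1\cdot y\cdot\pi-\phi\,\}\pmod{2\pi}$.
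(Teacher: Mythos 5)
Your proof is correct, and it takes a genuinely different route from the paper's. The paper argues at the level of states: it shows that the target superposition $\sqrt{p}\,e^{i\gamma_1}\ket{\tilde s_1}+\sqrt{1-p}\,e^{i\gamma_2}\ket{\tilde s_2}$ has uniform-magnitude computational-basis amplitudes, by first solving the single-qubit case and then lifting to $n$ qubits via an $X$-basis-controlled diagonal operator $\mathrm{hc}U$ that is a permutation matrix in the computational basis (and hence preserves the uniform-magnitude property). You instead write down the phases in closed form, $e^{i\theta_y}=(-1)^{s_1\cdot y}\bigl[\cos\phi+i\sin\phi\,(-1)^{d\cdot y}\bigr]$ with $d=s_1\oplus s_2\neq 0$, and verify the two overlaps directly by character orthogonality on $\{0,1\}^n$; the check that the bracket has unit modulus (which hinges on the relative coefficient being purely imaginary, exactly as you note) is the only nontrivial point, and your choice $\phi=\arccos\sqrt{p}$ handles the endpoints $p\in\{0,1\}$ as well. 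Your construction buys explicitness and economy: the phase function is $\theta_y=\pi(s_1\cdot y)\pm\phi$, i.e.\ an affine (single-qubit-generator) phase plus a single $\pm\phi$ sign controlled by the parity $d\cdot y$, so it is realizable with very few IQP generators, whereas the paper's recursive argument gives a structural picture (uniform-magnitude states are closed under computational-basis permutations and tensor products) that is less explicit but perhaps easier to adapt to other sparsity patterns. Both proofs establish exactly the statement needed for the universality theorem.
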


\begin{proof}
A state is \emph{uniform-magnitude (UMA)} in the computational basis iff it can be written as
$\frac{1}{\sqrt{2^n}}\sum_{y} e^{i\theta_y}\ket{y}$ for some phases $\{\theta_y\}$, so it suffices to
show that for some $\gamma_1,\gamma_2$ the state
\begin{equation}\label{eq: 2 sparse state}
    \sqrt{p}e^{i\gamma_1}\ket{\tilde s_1}+\sqrt{1-p}e^{i\gamma_2}\ket{\tilde s_2}
\end{equation}
is UMA.

Start with $n=1$. The UMA state $\tfrac{1}{\sqrt2}(\ket{0}+e^{i\theta}\ket{1})$
equals $e^{i\theta/2}\big(\cos(\tfrac\theta2)\ket{+}-i\sin(\tfrac\theta2)\ket{-}\big)$, so choosing
$\theta$ with $\cos^2(\tfrac\theta2)=p$ yields amplitudes $\sqrt{p},\sqrt{1-p}$ (the phases are absorbed
into $\gamma_1,\gamma_2$). This covers $p\in\{0,1\}$ via $\theta\in\{0,\pi\}$.

Consider now the arbitrary $n$. Pick a position where $\tilde s_1$ and $\tilde s_2$ differ and relabel it as qubit~1
(permutations preserve UMA). Then
$\ket{\tilde s_1}=\ket{+}\ket{\tilde s'_1}$ and $\ket{\tilde s_2}=\ket{-}\ket{\tilde s'_2}$ for some
$(n-1)$-qubit $X$-basis product states $\ket{\tilde s'_1},\ket{\tilde s'_2}$. If
$\ket{\tilde s'_1}=\ket{\tilde s'_2}$, we are done since the tensor product of UMA states is UMA.

Otherwise, choose the diagonal $(n\!-\!1)$-qubit unitary $U=\bigotimes_{j=2}^n Z^{b_j}$ (with
$b_j\in\{0,1\}$ chosen so that $U\ket{\tilde s'_1}=\ket{\tilde s'_2}$; note $Z$ toggles
$\ket{+}\!\leftrightarrow\!\ket{-}$). Define the $X$-basis–controlled unitary
$\mathrm{hc}U := \ket{+}\!\bra{+}\otimes I + \ket{-}\!\bra{-}\otimes U$. Then
\begin{equation*}
    \mathrm{hc}U\!\left(\big(\sqrt{p}e^{i\gamma_1}\ket{+}+\sqrt{1-p}e^{i\gamma_2}\ket{-}\big)\otimes\ket{\tilde s'_1}\right)
= \sqrt{p}e^{i\gamma_1}\ket{\tilde s_1}+\sqrt{1-p}e^{i\gamma_2}\ket{\tilde s_2}.
\end{equation*}
In the computational basis $\mathrm{hc}U$ is a block matrix
\begin{equation*}
    \mathrm{hc}U=\begin{pmatrix}(I+U)/2 & (I-U)/2\\[2pt] (I-U)/2 & (I+U)/2\end{pmatrix},
\end{equation*}
Note that in each row, we have only one non-zero element. So this is a permutation matrix. For a permutation matrix $U$, it holds that $\ket{\phi}$ is UMA if and only if $U\ket{\phi}$ is UMA. Thus, \cref{eq: 2 sparse state} represents the UMA state, which concludes the proof.
\end{proof}

Next, we present the two key probability distribution decomposition lemmas in the language of probability vectors, i.e., vectors $p=(p_j)_j$, $\sum_j p_j = 1,\ p_j\geq 0$, encoding the probability of the measurement of each of the bitstrings (indexed by $j$).  
We will say a (probability) vector is $k$-sparse if it has at most $k$ non-zero entries.

\begin{lemma}
Every $N$-dimensional probability vector $p$ can be expressed as a uniform mixture of $N$ 3-sparse probability vectors. That is, there exists a set $\{ q(i)\}_i$ of $N$ 3-sparse probability vectors such that
$$
p = \sum_i \frac{1}{N} q(i).
$$
\end{lemma}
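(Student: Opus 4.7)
The plan is to prove this by induction on $N$, using an alias-method-style peeling argument. The argument in fact yields the stronger bound of $N$ \emph{2-sparse} probability vectors, which immediately implies the 3-sparse claim of the lemma; I will carry out the stronger statement since it is no harder.

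For the base case $N=1$, the only probability vector is $p=(1)$, which is $1$-sparse. For the inductive step, assume the claim for dimension $N-1$. Given an $N$-dimensional probability vector $p$, pick an index $j^\ast$ with $p_{j^\ast}\le 1/N$ and an index $k^\ast\ne j^\ast$ with $p_{k^\ast}\ge 1/N$; both exist because $\sum_j p_j=1$. I would then define the 2-sparse probability vector
\begin{equation*}
    q^{(1)}_{j^\ast}=Np_{j^\ast},\qquad q^{(1)}_{k^\ast}=1-Np_{j^\ast},\qquad q^{(1)}_\ell=0 \text{ for } \ell\notin\{j^\ast,k^\ast\},
\end{equation*}
which is a legitimate probability vector since $Np_{j^\ast}\in[0,1]$.

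Next I would form $p' := p - \tfrac{1}{N}q^{(1)}$. Direct substitution yields $p'_{j^\ast}=0$, $p'_{k^\ast}=p_{k^\ast}-\tfrac{1-Np_{j^\ast}}{N}\ge 0$ (using $p_{k^\ast}\ge 1/N$ and $p_{j^\ast}\ge 0$), and $p'_\ell=p_\ell\ge 0$ for all other $\ell$, with total mass $(N-1)/N$. Restricting $p'$ to the $N-1$ coordinates different from $j^\ast$ and rescaling by $N/(N-1)$ produces a probability vector $\tilde p$ on $N-1$ points. Applying the inductive hypothesis yields 3-sparse (indeed 2-sparse) probability vectors $\tilde q^{(2)},\ldots,\tilde q^{(N)}$ with $\tilde p=\tfrac{1}{N-1}\sum_{i=2}^N \tilde q^{(i)}$. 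Embedding each $\tilde q^{(i)}$ back into $\mathbb{R}^N$ by assigning value $0$ at the $j^\ast$-coordinate preserves sparsity, and a short calculation gives $p=\tfrac{1}{N}\sum_{i=1}^N q^{(i)}$, closing the induction.

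There is no genuine obstacle in this proof; the only points requiring care are the non-negativity of $p'_{k^\ast}$ (which is immediate from the choices $p_{j^\ast}\le 1/N$ and $p_{k^\ast}\ge 1/N$) and the matching of dimensions after peeling. The structural reason the recursion closes cleanly is that each step removes exactly one support coordinate and exactly one of the $N$ uniform-mixture slots. I expect the only subtlety worth flagging is that the construction yields $2$-sparse vectors rather than $3$-sparse ones, so the lemma as stated is a weakening of what the argument actually produces; this may matter (or not) depending on how the subsequent $2^{n+1}$ 2-sparse decomposition result is derived from it.
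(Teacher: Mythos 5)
Your proof is correct, but it takes a genuinely different route from the paper and in fact proves a strictly stronger statement. The paper proceeds by explicitly constructing an $N\times N$ allocation matrix $Q$ with column sums $p_j$ and row sums $1/N$: it sorts $p$, places the small entries on the diagonal, and greedily spreads the remaining mass across row capacities, arguing that each row ends up with at most three nonzero entries; the reduction to 2-sparse vectors is then done in a separate follow-up lemma by splitting each 3-sparse row into two 2-sparse vectors, which doubles the count to $2N$. Your inductive peeling argument is the classical alias-method construction: at each step you extract one 2-sparse vector with weight $1/N$, kill one coordinate, and recurse on an $(N-1)$-point distribution, yielding $N$ \emph{2-sparse} vectors directly. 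All steps check out — the nonnegativity $p'_{k^\ast}=p_{k^\ast}-\tfrac1N+p_{j^\ast}\ge 0$ is exactly as you say, and the existence of $j^\ast$ with $p_{j^\ast}\le 1/N$ and a distinct $k^\ast$ with $p_{k^\ast}\ge 1/N$ (for $N\ge 2$) follows by taking $j^\ast$ a minimizer and $k^\ast$ a maximizer over the remaining indices, a one-line justification worth spelling out. Your observation at the end is the interesting payoff: since the exact-universality construction needs one 2-sparse distribution per hidden-register basis state, a decomposition into $2^n$ (rather than $2^{n+1}$) 2-sparse distributions would let the theorem go through with $m=n$ hidden qubits instead of $m=n+1$, improving the paper's result and bearing directly on the open question it raises about the optimal hidden-qubit count.
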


\textit{Proof.} We give a constructive proof by showing there exists an allocation matrix $Q\in\mathbb{R}^{N\times N}$ of $p$ such that the column sum $\sum_i Q_{ij} = p_j$ and the row sum $\sum_j Q_{ij} = 1/N$, which is also 3-row-sparse. The 3-sparse probability vectors are the rows of the $Q$, i.e., $q(i)/N$ is the $i$th row of $Q$.
We assume $p_1\leq p_2 \leq \cdots \leq p_N$ w.l.o.g. Let $k$ be the smallest index such that $p_{k+1} \geq 1/N$. Note $k=1$ if the distribution is uniform and can be as large as $N-1$, but not $N$. \\
The general idea is that we first allocate $p_1,\ldots, p_k$ to the first $k$ diagonal entries of $Q$. Since $p_1\leq p_2\leq\cdots \leq p_k \leq 1/N$, we have some capacity left on each row, and hence we can further try to spread out $p_{k+1}$ to the first $k$ rows. After $p_{k+1}$ is fully spread out, we proceed with $p_{k+2}$ and so on. Note that whenever the first rows are out of capacity, we will proceed using the remaining $N-k$ rows. The matrix $Q$ can be constructed and verified with the following three steps. \\
Step 1: Initialization.
$$Q = 
\begin{bmatrix}
Q_0 &0 \\
0 & 0
\end{bmatrix}
,\;
Q_0 = 
\begin{bmatrix}
p_1 & 0   & \cdots & 0 \\
0   & p_2 & \cdots & 0 \\
\vdots & \vdots & \ddots & \vdots \\
0   & 0   & \cdots & p_k
\end{bmatrix}
$$
\\
Step 2: Iterate over $\ell=k+1, \ldots, N$. For each iteration, we spread out $p_\ell$ to the $\ell$ column of $Q$ as follows. For $i$ running from 1 to $N$, we set
\begin{equation}\label{eq:Q_update}
    Q_{i\ell} = \min\bigg\{\frac{1}{N} - \sum_{j}Q_{ij}, p_\ell - \sum_{\alpha} Q_{\alpha \ell}\bigg\}.
\end{equation}
Note that, $N^{-1} - \sum_{j}Q_{ij}$ is the remaining capacity of row $i$ (recall each row sums to $1/N$) and $p_\ell - \sum_{\alpha < i} Q_{\alpha \ell}$ is the residual of $p_\ell$ after spreading it over the first $i-1$ row. Two indicator functions check if a row is out of capacity or $p_\ell$ is completely spread out. \\
Step 3: Correctness. After iteration $\ell = k+1$, we must have at most two nonzero entries in each row since initially, $Q$ has at most one nonzero value per each row, and by Eq.~\eqref{eq:Q_update}, we only modify one entry of each row. Let $r$ be the largest row index such that $Q_{r\ell}\neq 0, \ell = k+1$. After the next iteration $\ell = k+2$, we have two cases: (1) if the capacity of row $r$ is used up by $p_{k+1}$, then the first nonzero entry of column $k+2$ starts at row $r$. In this case, $Q$ has at most two nonzero entries per row; (2) if some capacity of row $r$ is left, i.e., $\sum_{j} Q_{rj} \leq 1/N$, then first nonzero entry of column $k+2$ starts at row $r+1$, giving rise to three nonzero entries for this row. Note that the above argument between iterations $k+1$ and $k+2$ holds for any two iterations $\ell$ and $\ell+1$. Hence, we conclude that after iteration $\ell \geq k+1$, the following conditions are true:
\begin{enumerate}
    \item $\sum_{i} Q_{ij} = p_j$ for $1 \leq j \leq \ell$;
    \item There are at most three nonzero entries in each row of $Q$.
\end{enumerate} 
Now, proceed with the above argument until $\ell = N$, we must have: (1) $\sum_{j} Q_{ij} = 1/N$ for $1\leq i \leq N$; (2) $\sum_{i} Q_{ij} = p_j$ for $1 \leq j \leq N$; (3) there are at most three nonzero entries in each row of $Q$.
\qed

Next, we show that using twice as many $q$-vectors, we can represent the probability vector $p$ with 2-sparse probabilities. The basic idea is that a 3-sparse vector can always be written as a uniform mixture of two 2-sparse vectors: Let $p$ be a 3-sparse vector with $0 < p_a \leq p_b \leq p_c$, for indices $a, b, c$. Then we define the probability vectors $q_1$ and $q_2$ as follows: $(q_1)_a = 2p_a$, $(q_1)_c = 1 - 2 p_a$, $(q_2)_b = 2 p_b$, and $(q_2)_c = 1 - 2p_b$. The correctness can be verified by $q_1,q_2$ are 2-sparse, $q_1/2 + q_2/2 = p$, and $p_a\leq 1/2$ and $p_b \leq 1/2$.

\begin{lemma}\label{lemma: any distributin with 2-sparse}
Every $N$-dimensional probability vector $p$ can be expressed as a uniform mixture of~$2N$ 2-sparse probability vectors. That is, there exists a set $\{ q(i) \}_i$ of $2N$ 2-sparse probability vectors such that
$$
p = \sum_i \frac{1}{2N} q(i).
$$
\end{lemma}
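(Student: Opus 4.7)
The plan is to chain the preceding lemma with the 3-sparse-to-2-sparse splitting trick already sketched in the paragraph above. First, I would invoke the preceding lemma to write
\begin{equation*}
    p \;=\; \sum_{i=1}^{N} \frac{1}{N}\,q(i),
\end{equation*}
where each $q(i)$ is a 3-sparse probability vector. Second, for each $i$ I would split $q(i)$ as a uniform mixture of two 2-sparse probability vectors $r(i,1)$ and $r(i,2)$, yielding
\begin{equation*}
    p \;=\; \sum_{i=1}^{N}\sum_{s\in\{1,2\}} \frac{1}{2N}\,r(i,s),
\end{equation*}
which is precisely the claimed uniform mixture of $2N$ 2-sparse vectors.

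For the splitting step, fix $i$ and relabel the (at most three) nonzero coordinates of $q(i)$ as $a,b,c$ with $0 < q_a \le q_b \le q_c$. I would define $r(i,1)$ by setting its $a$-th coordinate to $2q_a$ and its $c$-th coordinate to $1-2q_a$, and $r(i,2)$ by setting its $b$-th coordinate to $2q_b$ and its $c$-th coordinate to $1-2q_b$. The key inequality I need is $q_a,q_b \le \tfrac{1}{2}$, so that these are valid probability vectors. This follows from the ordering and normalization: since $q_a+q_b+q_c = 1$ and $q_b \le q_c$, we have $2q_b \le q_b + q_c = 1 - q_a \le 1$, and similarly for $q_a$. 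Then a direct coordinate check gives $\tfrac{1}{2}r(i,1)+\tfrac{1}{2}r(i,2) = q(i)$, and each $r(i,s)$ is 2-sparse by construction.

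If $q(i)$ happens to be 2-sparse or 1-sparse already, I would simply set $r(i,1) = r(i,2) = q(i)$, which remains a valid uniform mixture and keeps each component 2-sparse, so the same count of $2N$ vectors is produced. Substituting the splittings into the $N$-term mixture yields the required $2N$-term uniform mixture, completing the proof.

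The main obstacle is really only bookkeeping: ensuring the ordering convention $q_a \le q_b \le q_c$ gives the needed bound $q_a,q_b \le \tfrac{1}{2}$ and handling degenerate (lower-sparsity) cases uniformly so that every one of the $N$ three-sparse pieces contributes exactly two 2-sparse vectors in the final mixture.
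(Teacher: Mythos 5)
Your proposal is correct and follows essentially the same route as the paper: invoke the preceding $N$-term $3$-sparse decomposition, then split each $3$-sparse piece into two $2$-sparse vectors via the construction $(2q_a,\,1-2q_a)$ and $(2q_b,\,1-2q_b)$ on the largest coordinate, duplicating pieces that are already $2$-sparse. Your explicit verification that $q_a,q_b\le\tfrac12$ just spells out the same check the paper states in its preceding observation.
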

\textit{Proof.} Taking Lemma 5, we know there exists a set $\{ q(i)\}_i$ of $N$ 3-sparse probability vectors such that $p = \sum_i \frac{1}{N} q(i)$. Now, for each $q(i)$, if it is 3-sparse but not 2-sparse, we can always split it into two 2-sparse vectors using the observation before the lemma, i.e., $q(i) = q_1(i)/2 + q_2(i)/2$; If $q(i)$ is  2-sparse, we simply produce two copies of it.
This construction satisfies the conditions of the lemma.  
\qed

With these lemmas in place, we can now prove the main result summarised in \cref{thm: exact universality}:
\begin{proof}
By Lemma~\ref{lemma: rho_b decomposition} the reduced density matrix of an IQP circuit with $m$ hidden qubits can be exactly written as sum of $2^m$ pure states density matrices $|\psi_k\rangle\langle\psi_k|$.
The output distribution will be a uniform mixture of the distributions obtained from measuring one of the $|\psi_k\rangle$ states in the basis specified by Pauli-X eigenstates. 
Each $|\psi_k\rangle$ can be expressed as a superposition of strings of Pauli-X eigenstates, and the probability of observing the string $b \in \{0,1\}^n$ is given by $|\langle \tilde{b}| \psi_k\rangle|^2$
with $|\tilde{b}\rangle = H^{\otimes n} |b\rangle  $.
In Lemma~\ref{lemma: any distributin with 2-sparse} we have shown how we can encode any 2-sparse distribution in  $|\psi_k\rangle$.
 On the other hand, Lemma~\ref{lemma: any distributin with 2-sparse} states that any probability distribution over $\{0,1\}^n$ can be decomposed as a uniform sum of $2^{n+1}$ such 2-sparse distributions. Therefore, by choosing $m = n+1$, we have sufficient expressive power to represent any target distribution $p$ exactly using an IQP circuit with hidden qubits.
\end{proof}

\section{(Non-)Universality: two qubits model}\label{proofs: Hidden v.s. visible only: toy example}
\begin{lemma}[distributions beyond $2$-qubit model expressive power]\label{lemma: worst case distributions for 2-qubit model}
    Any $3$-sparse distribution over $\{0, 1\}^2$ with exactly $3$ non-zero components cannot be represented by a $2$-qubit IQP-QCBM.
\end{lemma}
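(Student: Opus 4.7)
The plan is to parametrize the $2$-qubit IQP output, reduce by symmetry to a single vanishing outcome, and then show via a uniqueness property on the unit circle that forcing one probability to vanish forces a second probability to vanish as well.

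First, I would fix the global-phase gauge $\alpha_{00}=0$ and set $\alpha_{01}=a,\ \alpha_{10}=b,\ \alpha_{11}=c$. Direct evaluation of $U(\theta)\ket{00}$ with $U(\theta)=H^{\otimes 2}D(\theta)H^{\otimes 2}$ yields
\begin{equation*}
    q_\theta(y) \;=\; \tfrac{1}{16}\left|\,\sum_{x\in\{0,1\}^2} e^{i\alpha_x}(-1)^{x\cdot y}\,\right|^2, \qquad y\in\{0,1\}^2.
\end{equation*}
Next, I would verify that the IQP family is invariant under qubit swap ($a\leftrightarrow b$) and under the parameter shifts $(a,b,c)\mapsto(a,b+\pi,c+\pi)$ and $(a,b,c)\mapsto(a+\pi,b,c+\pi)$, which absorb the $e^{i\pi}=-1$ factors into the formulas above and implement single-qubit bit-flips on the output. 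The resulting group acts transitively on $\{0,1\}^2$, so without loss of generality any $3$-sparse distribution may be assumed to satisfy $q_\theta(11)=0$, which expands to the complex equation $1+e^{ic}=e^{ia}+e^{ib}$.

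The core step is a uniqueness lemma on the circle: for any $w\in\mathbb{C}$ with $0<|w|\le 2$, the multiset $\{u_1,u_2\}$ of unit complex numbers with $u_1+u_2=w$ is uniquely determined (the $u_j$ are the two solutions of $|\cos((\beta_1-\beta_2)/2)|=|w|/2$ with midpoint angle $\arg w$). Applied to $w=1+e^{ic}$, the representation $\{1,e^{ic}\}$ must coincide with $\{e^{ia},e^{ib}\}$, so when $c\neq\pi$ either $(a,b)\equiv(0,c)$ or $(a,b)\equiv(c,0)\pmod{2\pi}$. Substituting back into the four probability formulas, the first case yields $q_\theta(01)=0$ and the second yields $q_\theta(10)=0$. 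The degenerate case $c=\pi$ gives $w=0$, forcing $b=a+\pi$, after which direct substitution yields $q_\theta(00)=0$. In every subcase $q_\theta$ has at most two non-zero components, contradicting the assumption of exactly three; the symmetry reduction then extends the conclusion to a vanishing at any outcome.

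The only mild obstacle is the symmetry-reduction bookkeeping: one must carefully match each parameter shift with the correct permutation of outputs on $\{0,1\}^2$, which is a short but finicky direct computation. The uniqueness argument on the circle is elementary, and everything else reduces to routine substitution into the four probability formulas.
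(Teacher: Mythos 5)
Your proposal is correct and follows essentially the same route as the paper: gauge-fix the diagonal phases, reduce by symmetry to a single vanishing outcome, obtain the constraint $1+e^{ic}=e^{ia}+e^{ib}$, and show it forces a second outcome probability to vanish, contradicting support size exactly three. The only difference is minor: the paper resolves the same constraint via the normalization/trigonometric identity giving $\theta_1-\theta_2=\pi$ or $\theta_1=0$ or $\theta_2=0$, whereas you use the (equally valid) geometric uniqueness of writing $w$ as a sum of two unit complex numbers, and your explicit parameter-shift bookkeeping makes the paper's ``w.l.o.g.'' reduction fully rigorous.
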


\begin{proof}
    It is equivalent to show that any $3$-sparse distribution over $\{+, -\}^2$ with exactly $3$ non-zero components cannot be represented by a $2$-qubit IQP-QCBM with the final Hadamard layer omitted.
    W.l.o.g., consider a target distribution with $p_{++}, p_{+-}, p_{-+} > 0$ and $p_{--} = 0$. The model produces distributions of the form
    \begin{equation}\label{eq: probs produced by 2-qubit model}
        p_{\sigma_1, \sigma_2} = \frac{1}{16}\bigl|1 + \sigma_1 e^{i\theta_1} + \sigma_2 e^{i\theta_2} + \sigma_1 \sigma_2 e^{i\theta_3}\bigr|^2,
    \end{equation}
    with $\sigma_1, \sigma_2 \in \{+,-\}$.
    From $p_{--} = 0$ we get
    \begin{equation}
        0=\frac{1}{16}\bigl|1 - e^{i\theta_1} - e^{i\theta_2} + e^{i\theta_3}\bigr|^2 \Rightarrow e^{i\theta_3} = e^{i\theta_1} + e^{i\theta_2} - 1.
    \end{equation}
    Substituting this into \cref{eq: probs produced by 2-qubit model} gives, and writing the normalization condition gives $\sum_{\sigma_1,\sigma_2} p_{\sigma_1,\sigma_2}=1$ is equivalent here to
    \begin{equation}
      \cos^2\!\left(\frac{\theta_1-\theta_2}{2}\right)
      + \sin^2\!\left(\frac{\theta_1}{2}\right)
      + \sin^2\!\left(\frac{\theta_2}{2}\right) = 1,
    \end{equation}
    This yields the conditions
    \begin{equation}
        \theta_1-\theta_2=\pi \ (\mathrm{mod}\ 2\pi) \quad\text{or}\quad \theta_1=0 \ (\mathrm{mod}\ 2\pi) \quad\text{or}\quad \theta_2=0 \ (\mathrm{mod}\ 2\pi).
    \end{equation}
    In each case at least one of $p_{++},p_{+-},p_{-+}$ vanishes:
    if $\theta_1-\theta_2=\pi$ then $p_{++}=0$; if $\theta_1=0$ then $p_{-+}=0$; if $\theta_2=0$ then $p_{+-}=0$.
    This contradicts the assumption that all three are strictly positive, completing the proof.
\end{proof}

\begin{lemma}\label{lem:4sparse_hidden_simple}
Every probability distribution on $\{0,1\}^2$ can be represented by a \(2\)-qubit IQP--QCBM with \emph{one} hidden qubit.
\end{lemma}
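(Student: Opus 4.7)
The plan is to apply \cref{lemma: rho_b decomposition} with $m=1$ and $n=2$ to reduce the problem to expressing $p$ as an equal-weight average of two $X$-basis measurement distributions of UMA states on two qubits, and then to proceed by case analysis on $|\mathrm{supp}(p)|$.

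Concretely, \cref{lemma: rho_b decomposition} gives $\rho_2 = \tfrac{1}{2}(\ket{\psi_0}\bra{\psi_0}+\ket{\psi_1}\bra{\psi_1})$ with each $\ket{\psi_k}=\tfrac{1}{2}\sum_y e^{i\theta_{k,y}}\ket{y}$ UMA, so after the final Hadamards the output distribution reads $p(b) = \tfrac{1}{2}(q_0(b)+q_1(b))$ with $q_k(b)=|\langle\tilde b|\psi_k\rangle|^2$ and $\ket{\tilde b}=H^{\otimes 2}\ket{b}$. The task thus becomes: given a probability distribution $p$ on $\{0,1\}^2$, exhibit two UMA states whose $X$-basis measurement distributions average to $p$.

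If $|\mathrm{supp}(p)|\le 2$, \cref{lemma:2-sparse-distribution-encoding} directly furnishes a UMA state $\ket{\psi_0}$ realising $p$, and I set $\ket{\psi_1}=\ket{\psi_0}$. If $|\mathrm{supp}(p)|=3$ with supported masses $p_a\le p_b\le p_c$ at bitstrings $a,b,c$, I reuse the two-sparse splitting from the proof of \cref{thm: exact universality} (the construction behind \cref{lemma: any distributin with 2-sparse}): take $q_0 = 2p_a\delta_a + (1-2p_a)\delta_c$ and $q_1 = 2p_b\delta_b + (1-2p_b)\delta_c$, both two-sparse and valid since $p_a\le 1/3$ and $p_b\le 1/2$, and realise each by \cref{lemma:2-sparse-distribution-encoding}.

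The main obstacle is the full-support case $|\mathrm{supp}(p)|=4$: by \cref{lemma: worst case distributions for 2-qubit model} the set $\mathcal{P}_{\mathrm{ach}}$ of 2-qubit IQP-realisable distributions excludes every strictly 3-sparse distribution, and the disjoint-support two-sparse decomposition only works when some pair of probabilities sums to exactly $1/2$, which is not guaranteed. My plan here is to exploit the slack afforded by the single hidden qubit: pick a pair $(a,b)$ with $p_a+p_b\ge 1/2$ (one such pair always exists) and take $q_0=\alpha\delta_a+(1-\alpha)\delta_b$ with $\alpha$ chosen strictly inside $(\max\{0,1-2p_b\},\,\min\{1,2p_a\})$, so that $q_1 = 2p-q_0$ is fully supported and therefore not 3-sparse. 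The hardest step is then to show that $\alpha$ can be chosen so that $q_1\in\mathcal{P}_{\mathrm{ach}}$; varying $\alpha$ traces a one-dimensional continuous curve of candidate $q_1$'s avoiding the 3-sparse subset, and I would establish solvability along this curve either by an explicit phase construction in the style of \cref{lemma:2-sparse-distribution-encoding}, or by a continuity/degree argument using that the image of the 7-dimensional 3-qubit IQP parameter map is closed, full-dimensional in the simplex, and already contains all 1-sparse, 2-sparse and uniform distributions.
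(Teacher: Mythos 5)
Your reduction via \cref{lemma: rho_b decomposition} and your handling of the cases $|\mathrm{supp}(p)|\le 3$ are correct: for support at most two you can indeed take $\ket{\psi_0}=\ket{\psi_1}$ from \cref{lemma:2-sparse-distribution-encoding}, and for support exactly three the split $q_0=2p_a\delta_a+(1-2p_a)\delta_c$, $q_1=2p_b\delta_b+(1-2p_b)\delta_c$ is a valid uniform mixture of two $2$-sparse distributions, each realizable by a UMA state. However, the full-support case is the heart of the lemma, and there your argument stops at a plan rather than a proof. You correctly observe that a decomposition into two $2$-sparse components only exists when some pair of masses sums to exactly $1/2$, so generically you must show that $q_1(\alpha)=2p-q_0$ lies in the visible-only achievable set $\mathcal{P}_{\mathrm{ach}}$ for some $\alpha$ --- but you never establish this. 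Neither of your fallback strategies closes the gap as stated: an ``explicit phase construction in the style of \cref{lemma:2-sparse-distribution-encoding}'' would have to produce full-support (not $2$-sparse) UMA measurement distributions, which that lemma does not address; and the topological argument ``closed $+$ full-dimensional $+$ contains the $1$-sparse, $2$-sparse and uniform points $\Rightarrow$ hits the curve $q_1(\alpha)$'' is not a valid inference --- a closed full-dimensional set containing those points can still miss an open region, and indeed \cref{lemma: worst case distributions for 2-qubit model} shows $\mathcal{P}_{\mathrm{ach}}$ excludes every exactly-$3$-support distribution, so it is non-convex with a nontrivial complement; a genuine degree/surjectivity argument would need substantially more structure than you invoke.

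For comparison, the paper avoids this obstacle entirely by working in correlation coordinates: it writes any visible-only output as a point $(x,y,z)$ in the tetrahedron $T$ via $p_{\sigma_1,\sigma_2}=\tfrac14(1+\sigma_1 x+\sigma_2 y+\sigma_1\sigma_2 z)$, restricts to the explicitly parametrized sub-family $x=au$, $y=av$, $z=uv$ with $a\in[0,1]$, $u,v\in[-1,1]$ (obtained by setting $\theta_3=\theta_1+\theta_2+\delta$), and then shows by a short two-case computation that averaging two such points (which is exactly what one hidden qubit provides, by \cref{lemma: rho_b decomposition}) sweeps every $z$ in the interval $[-1+|x+y|,\,1-|x-y|]$ for any admissible $(x,y)$, i.e.\ covers all of $T$. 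If you want to salvage your route, you would need an analogous explicit description of (a sufficiently large subset of) $\mathcal{P}_{\mathrm{ach}}$ to certify $q_1(\alpha)\in\mathcal{P}_{\mathrm{ach}}$; as written, the generic full-support case remains unproven.
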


\begin{proof}
Similar to proof of \cref{lemma: worst case distributions for 2-qubit model} we ignore last layer of Hadamards by switching to distributions over $\{+, -\}^2$ representation. Recall that distribution over $\{+, -\}^2$ produced by $2$-qubit IQP-QCBM (omitting last Hadamard layer) can be represented by choosing the parameters in:
    \begin{equation}
        p_{\sigma_1, \sigma_2} = \frac{1}{16}|1 + \sigma_1 e^{i\theta_1} + \sigma_2 e^{i\theta_2} + \sigma_1 \sigma_2 e^{i\theta_3}|^2
    \end{equation}
    with $\sigma_1, \sigma_2 \in \{+, -\}$. We split the proof into steps for easier understanding.
    
\emph{Step 1: Geometric interpretation of model and target distribution.} First, we can simplify the distribution above produced by model without hidden qubits:
\begin{align*}
    p_{\sigma_1, \sigma_2} &= \frac{(1 + \sigma_1 e^{i\theta_1} + \sigma_2 e^{i\theta_2} + \sigma_1 \sigma_2 e^{i\theta_3})(1 + \sigma_1 e^{-i\theta_1} + \sigma_2 e^{-i\theta_2} + \sigma_1 \sigma_2 e^{-i\theta_3})}{16} \\ &= \frac{1}{4}(1 + \sigma_1 x + \sigma_2 y + \sigma_1 \sigma_2 z)
\end{align*}
where we denote
\begin{equation}\label{eq: iqp produced inside thetrahedron}
    \begin{cases}
        x &=\tfrac12\big(\cos\theta_1+\cos(\theta_2-\theta_3)\big),\\
        y &=\tfrac12\big(\cos\theta_2+\cos(\theta_1-\theta_3)\big),\\
        z &=\tfrac12\big(\cos\theta_3+\cos(\theta_1-\theta_2)\big).
    \end{cases}
\end{equation}
Normalization is automatic. Non-negativity is equivalent to that $(x,y,z)$ lies in the tetrahedron:
\begin{equation}
   T = \{(x, y, z)| - 1 + |x + y| \leq z \leq 1 - |x - y|\}
\end{equation}
which is specified by the vertices:
\begin{equation}
    \{(1, 1, 1), (1, -1, -1), (-1, 1, -1), (-1, -1, 1)\}
\end{equation}

Thus, the interpretation is the following. An arbitrary target probability distribution is a point that belongs to the tetrahedron $T$, but the parametrized model without hidden qubits only reaches a subset of the tetrahedron that satisfies \cref{eq: iqp produced inside thetrahedron}.

\emph{Step 2: Adding one qubit.} To simplify our calculations further, we purposefully restrict even more \cref{eq: iqp produced inside thetrahedron}. Consider $\theta_3=\theta_1+\theta_2+\delta$ with $\delta\in[0,\pi]$, this leads to simple parametrization of visible only qubit 

\begin{equation}
    \begin{cases}
        x &= au, \\
        y &= av, \\
        z &= uv,
    \end{cases}
    \qquad a \in [0,1], \\
    \qquad u,v \in [-1,1].
\end{equation}
where the following notations used:
\begin{equation}
a:=\cos(\delta/2)\in[0,1],\quad
u:=\cos(\theta_1+\tfrac{\delta}{2}),\ v:=\cos(\theta_2+\tfrac{\delta}{2})\in[-1,1].
\end{equation}

With one hidden qubit, we roughly speaking average two circuits with visible qubits only with independent parameters, which follows directly from \cref{lemma: rho_b decomposition}, under simplification above it can be written as: 
\begin{equation}\label{eq:avg_general}
    \begin{cases}
        x &= \tfrac12(au+\tilde a\tilde u), \\
        y &= \tfrac12(av+\tilde a \tilde v), \\
        z &= \tfrac12(uv+\tilde u\,\tilde v),
    \end{cases}
    \qquad a, \tilde a \in [0,1], \\
    \qquad u,v,\tilde u,\tilde v\in[-1,1].
\end{equation}

\emph{Step 3: Hitting an arbitrary target $(x,y,z)\in T$.}
By symmetry, we may assume $x\ge y\ge0$.

\emph{Case A: $x>\tfrac12$.}
Set $a=1$ and $u=\tilde v=1$. Then
\begin{equation}
    \tilde u=\frac{2x-1}{\tilde a},\qquad v=2y-\tilde a,\qquad \tilde a\in[2x-1, 1],
\end{equation}
and
\begin{equation}
    z(\tilde a)=\frac{2y-\tilde a}{2}+\frac{2x-1}{2\tilde a},\quad z(1)=x+y-1,\quad z(2x-1)=1-(x-y).
\end{equation}
By continuity, $z$ sweeps the full band $[x+y-1,\ 1-(x-y)]$.

\emph{Case B: $x\le\tfrac12$.}
Fix
\begin{equation}
    a:=1-2x,\quad \tilde a:=1,\quad u:=-1,\quad \tilde u:=1,
\end{equation}
let $v\in[-1,1]$ be free, and set $\tilde v:=2y-(1-2x)v$. Then
\begin{equation}
    x=\tfrac12(au+\tilde a \tilde u)=\tfrac12\big(-(1-2x)+1\big)=x,\quad y=\tfrac12(av+\tilde a \tilde v)=y,
\end{equation}
and
\begin{equation}
    z=\tfrac12(uv+\tilde u\tilde v)=y-(1-x)v.
\end{equation}
As $v$ ranges over $[-1,1]$, we obtain
\begin{equation}
    z\in[y-(1-x),\ y+(1-x)]=[x+y-1,\ 1-(x-y)].
\end{equation}

Feasibility: $u,\tilde u\in[-1,1]$, and since $0\le y\le x\le\tfrac12$ we have
$2y\pm(1-2x)\in[-1,1]$, so $\tilde v\in[-1,1]$ for all $v\in[-1,1]$.

Combining Cases A and B (and using sign flips/swaps for other quadrants) shows that, for any fixed $(x,y)$, the reachable $z$ coincides with the entire tetrahedron interval $[-1+|x+y|,\ 1-|x-y|]$.

\end{proof}
\section{Proof of \cref{remark: efficient Pauli-Z estimation IQP with hidden qubits}}\label{proof: efficient Pauli-Z estimation IQP with hidden qubits}

\begin{proof}
    The key idea is to express the quantum expectation $\langle Z_\alpha \rangle_{q_\theta}$ as a classical expectation over uniformly random bitstrings, which can be estimated efficiently using Monte Carlo sampling similar to IQP-QCBMs without hidden qubits proven in \cite{recio2025train}.

    Due to the structure of IQP-QCBM, the expectation $\langle Z_\alpha \rangle_{q_\theta}$ can be re-expressed as an expectation over uniformly random bitstrings:
    \begin{equation}
        \langle Z_\alpha \rangle_{q_\theta} = \mathbb{E}_{y \sim U_m} \mathbb{E}_{z \sim U_n} \Bigg[\cos \Bigg( \sum_j \theta_j (-1)^{g_j^{(y)} \cdot y \oplus g_j^{(z)} \cdot z}  \left( 1 - (-1)^{z \cdot \alpha} \right) \Bigg) \Bigg]
    \end{equation}
    where $y, z$ is sampled uniformly at random. The vectors $g_j^{(y)}$ and $g_j^{(z)}$ denote the $g_j$ acting on hidden and visible qubits, respectively.
    
    This expectation can be approximated by randomly sampling pairs $(y_k, z_i)$ and computing:
    \begin{equation}
        \widehat{\langle Z_\alpha \rangle}_{q_\theta} = \frac{1}{|Y|\, |Z|} \sum_{k=1}^{|Y|} \sum_{i=1}^{|Z|} \cos \Bigg( \sum_j \theta_j (-1)^{g_j^{(y)} \cdot y_k \oplus g_j^{(z)} \cdot z_i} \left( 1 - (-1)^{z_i \cdot \alpha} \right) \Bigg)
    \end{equation}
    By construction, the estimator is unbiased:
    \begin{equation}
        \mathbb{E}_{Y,Z} \left[ \widehat{\langle Z_\alpha \rangle}_{q_\theta} \right] = \langle Z_\alpha \rangle_{q_\theta}
    \end{equation}
    To achieve an additive error $\varepsilon$, it suffices to take
    \begin{equation}
        |Y||Z| = \mathcal{O}\left(\frac{1}{\varepsilon^2}\right)
    \end{equation}
    samples. Setting $\varepsilon = \mathcal{O}(\mathrm{poly}(n^{-1}, m^{-1}))$ requires only a polynomial number of samples in $n$ and $m$. Thus, even in the presence of hidden qubits, the expectation value $\langle Z_\alpha \rangle_{q_\theta}$ can be estimated classically to polynomial accuracy using a Monte Carlo method with polynomial runtime.
    \end{proof}

\section{Proof of \cref{lemma: kernel choice matters}}\label{proof: kernel choice matters}
\begin{proof}
Let the target distribution be the uniform distribution:
\begin{equation}
p(x) = \frac{1}{2^n}, \quad \forall x \in \{0,1\}^n.
\end{equation}

Consider an IQP-QCBM circuit with generators $g_j = (0,\ldots,0,1,0,\ldots,0)$ (a single-qubit $Z$ term).  
For parameters $\theta' = (\pi/8, \pi/4, \ldots, \pi/4)$, the circuit outputs:
\begin{equation}
q_{\theta'}(x) =
\begin{cases}
\dfrac{\cos^2(\pi/8)}{2^{n-1}}, & x_1 = 0, \\[6pt]
\dfrac{\sin^2(\pi/8)}{2^{n-1}}, & x_1 = 1.
\end{cases}
\end{equation}

The TV distance simplifies to:
\begin{align}
    \operatorname{TV}(p, q_{\theta'}) 
    &= \frac{1}{2} \sum_{x \in \{0,1\}^n} |p(x) - q_{\theta'}(x)| = \frac{1}{2} \Bigg(
        2^{n-1} \left| \frac{\cos^2\left(\frac{\pi}{8}\right)}{2^{n-1}} - \frac{1}{2^n} \right| 
      + 2^{n-1} \left| \frac{\sin^2\left(\frac{\pi}{8}\right)}{2^{n-1}} - \frac{1}{2^n} \right|
    \Bigg) \nonumber \\
    &= \frac{1}{4} \Big( |2\cos^2(\frac{\pi}{8}) - 1| + |2\sin^2(\frac{\pi}{8}) - 1| \Big)
      = \frac{2}{4} \cos\left(\frac{\pi}{4}\right)
      = \frac{\sqrt{2}}{4}
\end{align}
This is a constant independent of $n$.

The Fourier characteristic for target and model distribution can be written as follows:
\begin{equation}
\phi_p(\alpha) = \delta_{\alpha, 0}, 
\qquad 
\phi_{q_{\theta}}(\alpha) = \prod_{j: \alpha_j=1} \cos(2\theta_j).
\end{equation}
At $\theta'$:
\begin{equation}
\phi_{q_{\theta'}}(\alpha) =
\begin{cases}
1, & \alpha = (0,\dots,0), \\[4pt]
\frac{\sqrt{2}}{2}, & \alpha = (1,0,\dots,0), \\[4pt]
0, & \text{otherwise}.
\end{cases}
\end{equation}
Thus, the only nonzero discrepancy occurs at $\alpha^\star = (1,0,\dots,0)$:
\begin{equation}
|\phi_p(\alpha) - \phi_{q_{\theta'}}(\alpha)| =
\begin{cases}
\dfrac{\sqrt{2}}{2}, & \alpha = \alpha^\star, \\
0, & \text{otherwise}.
\end{cases}
\end{equation}
The partial derivatives of model distribution w.r.t. parameter $\theta_k$:
\begin{equation}
    \frac{\partial}{\partial \theta_k}\phi_{q_{\theta'}}(\alpha) = 
    \begin{cases}
        -2 \sin(2\theta_k) \prod_{j\neq k, j: \alpha_j = 1} \cos(2\theta_j)& \text{if } \alpha_k =1 \\
        0 & \text{otherwise}
    \end{cases}
\end{equation}

For any bounded characteristic kernel $k$ with spectral measure $G(\alpha)$:
\begin{equation}
\operatorname{MMD}^2_{k}(p, q_{\theta'}) 
= \sum_{\alpha \in \{0,1\}^n} 
|\phi_p(\alpha) - \phi_{q_{\theta'}}(\alpha)|^2 \, G(\alpha) 
= \tfrac{1}{2} G(\alpha^\star).
\end{equation}
Gradient structure:
\begin{align}
\left\| \nabla_\theta \operatorname{MMD}^2_{k_\sigma}(p, q_{\theta'}) \right\| 
&= \sqrt{\sum_{j=1}^n \left( \frac{\partial}{\partial \theta_j} 
\operatorname{MMD}^2_{k_\sigma}(p, q_{\theta'}) \right)^2 } \notag \\
&= \sqrt{\sum_{j=1}^n \left( 
\sum_{\alpha \in \{0,1\}^n} 
2 \, \big| \phi_p(\alpha) - \phi_{q_{\theta'}}(\alpha) \big| 
\, G(\alpha)  \frac{\partial}{\partial \theta_j} \phi_{q_{\theta}}(\alpha) 
\Bigg|_{\theta = \theta'} 
\right)^2 } \notag \\
&= 2 \, \big| \phi_p(\alpha^\star) - \phi_{q_{\theta'}}(\alpha^\star) \big| 
\, G(\alpha^\star) \left|\frac{\partial}{\partial \theta_1} \phi_{q_{\theta}}(\alpha^\star) 
\Bigg|_{\theta = \theta'}\right|= 2 G(\alpha^\star) 
\end{align}

\begin{itemize}
        \item \textit{Gaussian kernel.} The spectral the spectral measure admits:
        \begin{equation}
            G_\sigma(\alpha) = p_\sigma^{|\alpha|}(1-p_\sigma)^{n-|\alpha|}, 
            \quad p_\sigma = \tfrac{1 - e^{-1/(2\sigma)}}{2} < \tfrac{1}{2}.
        \end{equation}
        This gives us an expression for MMD and its gradient norm:
        \begin{align}
            \operatorname{MMD}^2_{k_\sigma}(p, q_{\theta'}) 
            &= \tfrac{1}{2} p_\sigma (1-p_\sigma)^{n-1} \in \mathcal{O}(2^{-n}), \\[6pt]
            \big\| \nabla_\theta \operatorname{MMD}^2_{k_\sigma}(p, q_{\theta'}) \big\|
            &= 2 p_\sigma (1-p_\sigma)^{n-1} \in \mathcal{O}(2^{-n}).
        \end{align}
    \item \textit{Smart kernel choice.} Define a kernel with spectral measure concentrated on $\alpha^\star$:
    \begin{equation}
        G_\kappa(\alpha) =
        \begin{cases}
        1 - \varepsilon, & \alpha = \alpha^\star, \\[4pt]
        \dfrac{\varepsilon}{2^n - 1}, & \text{otherwise},
        \end{cases}
        \quad 0 < \varepsilon \ll 1.
    \end{equation}
    This kernel is characteristic since $G_\kappa(\alpha) > 0$ for all $\alpha$. For MMD and its gradient norm, we have:
    \begin{align}
        \operatorname{MMD}^2_\kappa(p, q_{\theta'}) 
        &= \tfrac{1}{2}(1-\varepsilon), \\[6pt]
        \big\| \nabla_\theta \operatorname{MMD}^2_\kappa(p, q_{\theta'}) \big\|
        &= 2(1-\varepsilon).
    \end{align}
    Thus, setting $C = \tfrac{1}{2}(1-\varepsilon)$ gives the desired constant lower bounds, which finishes the proof.
\end{itemize}
\end{proof}

\section{Proof of \cref{lemma: consistency with weak convergence}}
\begin{proof}\label{proof: consistency with weak convergence}
We prove both directions separately.

\noindent \textbf{($\Leftarrow$)} Suppose $q_t \xrightarrow{\mathrm{d}} p$. Since $\{0,1\}^n$ is finite, weak convergence is equivalent to convergence in total variation:
\begin{equation}
\lim_{t \to \infty} \mathrm{TVD}(q_t, p) = \frac{1}{2} \sum_{x \in \{0,1\}^n} |q_t(x) - p(x)| = 0.
\end{equation}

By Parseval's identity for the Fourier basis on $\{0,1\}^n$ and Cauchy-Schwarz inequality, we have:
\begin{align}
    \sum_{\alpha \in \{0,1\}^n} |\phi_p(\alpha) - \phi_{q_t}(\alpha)|^2 &= 2^n\sum_{x \in \{0,1\}^n} |p(x) - q_t(x)|^2 \leq 2^n \left( \sum_{x} |p(x) - q_t(x)| \right)^2 \notag\\
    &= 2^n \cdot 4\mathrm{TVD}(p, q_t)^2   
\end{align}

Therefore,
\begin{equation}
    0 \leq \lim_{t \to \infty} \mathcal{L}(q_t, p) \leq \lim_{t \to \infty} \left(4\cdot 2^n\mathrm{TVD}(q_t, p)^2 \max_{\gamma, \alpha} G_\gamma(\alpha)\right) = 0
\end{equation}

\noindent \textbf{($\Rightarrow$)} Conversely, suppose $\lim_{t \to \infty} \mathcal{L}(q_t, p) = 0$. 

By definition,
\begin{equation}
\mathcal{L}(q_t, p) = \max_\gamma \sum_{\alpha \in \{0,1\}^n} G_\gamma(\alpha) |\phi_{q_t}(\alpha) - \phi_p(\alpha)|^2.
\end{equation}

Since each $G_\gamma$ has full support, there exists a constant $c > 0$ such that $G_\gamma(\alpha) \geq c$ for all $\gamma$ and all $\alpha$. Thus,
\begin{equation}
    \mathcal{L}(q_t, p) \geq c \sum_{\alpha \in \{0,1\}^n} |\phi_{q_t}(\alpha) - \phi_p(\alpha)|^2 = 2^n c\sum_{x \in \{0,1\}^n} |q_t(x) - p(x)|^2 \to 0
\end{equation}

where the last equality follows from Parseval's identity.

Applying Cauchy-Schwarz inequality, we obtain
\begin{equation}
    \mathrm{TVD}(q_t, p) = \frac{1}{2} \sum_{x \in \{0,1\}^n} |q_t(x) - p(x)| \leq \frac{1}{2} \sqrt{2^n \sum_{x} |q_t(x) - p(x)|^2} \to 0
\end{equation}

Hence, $\mathrm{TVD}(q_t, p) \to 0$, which is equivalent to $q_t \xrightarrow{\mathrm{d}} p$ since $\{0,1\}^n$ is finite.
\end{proof}

\section{Supplementary proof for \cref{lemma: mmd limitations}}\label{proof: mmd limitations supplementary}
\begin{lemma}\label{lemma: exponentially decaying characteristic functions}
There exists $n_0\in\mathbb{N}$ such that for every $n\ge n_0$ there exist distributions $p,q$ on $\{0,1\}^n$ with
\begin{equation}
    \mathrm{TVD}(p,q)=1,\qquad \forall\,\alpha\in\{0,1\}^n\setminus\{\mathbf{0}\}:\ |\phi_p(\alpha)| \le 2^{-n/4},\ \ |\phi_q(\alpha)| \le 2^{-n/4},
\end{equation}
where $\mathbf{0}$ is the bitstring with zeroes in all positions and $\phi_p, \phi_q$ are the characteristic functions of the corresponding distributions.
\end{lemma}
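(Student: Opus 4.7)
The plan is to construct $p$ and $q$ as the uniform distributions on two disjoint random subsets of $\{0,1\}^n$, and then use a standard concentration plus union-bound argument to guarantee that every non-trivial Fourier coefficient of both distributions is exponentially small with positive probability.

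Concretely, I would fix $m \coloneqq \lceil C\, n\, 2^{n/2}\rceil$ for a sufficiently large absolute constant $C>0$, and let $n_0$ be large enough that $2m \le 2^n$ for all $n\ge n_0$. Draw $2m$ distinct elements $x_1,\dots,x_{2m}$ uniformly at random \emph{without replacement} from $\{0,1\}^n$, set $S_p \coloneqq \{x_1,\dots,x_m\}$ and $S_q \coloneqq \{x_{m+1},\dots,x_{2m}\}$, and take $p,q$ to be the uniform distributions on $S_p,S_q$, respectively. Since $S_p \cap S_q = \emptyset$, the supports are disjoint and $\mathrm{TVD}(p,q)=1$ holds deterministically.

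For every fixed $\alpha \in \{0,1\}^n \setminus \{\mathbf{0}\}$, the function $x \mapsto (-1)^{x\cdot\alpha}$ is balanced on $\{0,1\}^n$, so $\phi_p(\alpha) = \frac{1}{m}\sum_{i=1}^{m}(-1)^{x_i\cdot\alpha}$ is the mean of $m$ $\pm 1$-valued variables drawn without replacement from a balanced population of size $2^n$. Hoeffding--Serfling then gives $\Pr[|\phi_p(\alpha)|\ge t] \le 2\exp(-m t^2/2)$, and the identical bound for $\phi_q(\alpha)$. Setting $t = 2^{-n/4}$ makes the exponent $m\, 2^{-n/2}/2 \ge (C/2)\,n$, and a union bound over the $2(2^n-1)$ pairs $(\alpha,\, p\ \text{or}\ q)$ yields $\Pr[\text{bad}] \le 4\cdot 2^n \cdot e^{-Cn/2}$, which is strictly less than $1$ as soon as $C/2 > \ln 2$ and $n \ge n_0$. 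Hence a pair $(p,q)$ with the claimed properties exists.

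The only real subtlety is the use of sampling \emph{without} replacement, which is what enforces $S_p \cap S_q = \emptyset$ and thus $\mathrm{TVD}(p,q)=1$ with certainty; i.i.d.\ sampling would allow collisions and can only give TVD close to (but strictly below) $1$. The Hoeffding--Serfling bound handles this cleanly, giving the same sub-Gaussian tail as the i.i.d.\ case, so the remaining work is just arithmetic: matching $C$, $t$, and $m$ so the union bound closes while $2m$ still fits inside $2^n$. This trade-off is precisely what pins down the exponent $n/4$ in the statement; a more careful choice of $m$ would in fact give any exponent strictly smaller than $n/2$.
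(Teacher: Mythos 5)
Your proof is correct and follows essentially the same route as the paper: a probabilistic-method construction with uniform distributions on random subsets, Hoeffding's inequality for sampling without replacement, and a union bound over the nonzero frequencies $\alpha$. The only (immaterial) difference is that the paper takes $q$ uniform on the complement of $A$, so the bound on $|\phi_q(\alpha)|$ follows deterministically from that on $|\phi_p(\alpha)|$ via $\phi_q(\alpha)=-\tfrac{m}{N-m}\phi_p(\alpha)$, whereas you use a second disjoint random block of size $m=\lceil Cn2^{n/2}\rceil$ and a second concentration bound plus union bound — both are valid.
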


\begin{proof}
Let $N=2^n$ and set $m:=\lfloor N/3\rfloor$. Sample $A\subseteq\{0,1\}^n$ uniformly among all subsets of size $m$, and let $B:= \{0, 1\}^n \setminus A$. Define $p:=\mathcal{U}_A$ and $q:=\mathcal{U}_B$, where $\mathcal{U}_S$ denotes the uniform distribution on $S$.

The characteristic function of $p$ can be written as
\begin{equation}
\phi_p(\alpha) = \mathbb{E}_{x \sim \mathcal{U}_A}[(-1)^{\alpha \cdot x}] = \frac{1}{m} \sum_{x \in A} (-1)^{\alpha \cdot x},
\end{equation}
where $\alpha\cdot x=\sum_{i=1}^n \alpha_i x_i \pmod 2$. Using Hoeffding's inequality for sampling without replacement, and since $\mathbb{E}_{x \sim \mathcal{U}_{\{0, 1\}^n}}[(-1)^{\alpha \cdot x}] = 0$ for $\alpha\neq\mathbf{0}$ while each summand lies in $[-1,1]$, we have for $\alpha \neq \mathbf{0}$:
\begin{equation}
    \Pr\left(|\phi_p(\alpha)|\ge \varepsilon\right)\ \le\ 2\exp \Big(-\frac{\varepsilon^2 m}{2}\Big).
\end{equation}
A union bound over $2^n - 1 < 2^n$ non-zero $\alpha$ gives:
\begin{equation}
    \Pr\left(\exists \alpha \neq \mathbf{0} : |\phi_p(\alpha)| \geq \varepsilon \right) \leq 2^{n+1} \exp\left(-\frac{\varepsilon^2 m}{2} \right).
\end{equation}
With $\varepsilon=2^{-n/4}$ and $m=\lfloor N/3\rfloor$,
\begin{equation}
    \frac{\varepsilon^2 m}{2} \geq \frac{2^{-n/2}}{2}\cdot\frac{2^n}{3} = \frac{2^{n/2}}{6},
\end{equation}
so
\begin{equation}
    \Pr\left(\exists\,\alpha\neq \mathbf{0}: |\phi_p(\alpha)|\geq 2^{-n/4}\right) \leq 2^{n+1}\exp\Big(-\frac{2^{n/2}}{6}\Big),
\end{equation}
which is $<2^{-n}$ for all sufficiently large $n$. Thus, with probability at least $1-2^{-n}$ (over the draw of $A$),
\begin{equation}
    |\phi_p(\alpha)|\le 2^{-n/4}\qquad\forall\,\alpha\neq \mathbf{0}.
\end{equation}

The characteristic function of $q$ can be expressed via that of $p$:
\begin{equation}
    0=\phi_{\mathcal{U}_{\{0,1\}^n}}(\alpha)=\frac{m}{N}\phi_p(\alpha)+\frac{N-m}{N}\phi_q(\alpha) \quad\Rightarrow\quad \phi_q(\alpha)=-\frac{m}{N-m}\,\phi_p(\alpha).
\end{equation}
Because $m\le N/3$, we have $\frac{m}{N-m}\le \frac{1}{2}$, hence 
\begin{equation}
    |\phi_q(\alpha)| \leq \frac{1}{2}|\phi_p(\alpha)| \leq 2^{-n/4-1} \leq 2^{-n/4}\qquad\forall\,\alpha\neq \mathbf{0}.
\end{equation}
Finally, $A\cap B=\emptyset$ implies $\mathrm{TVD}(p,q)=1$. This establishes the claim for all sufficiently large $n$.
\end{proof}

\section{Numerical experiments details}
\begin{figure*}[tb]
\centering
\includegraphics[width=\textwidth]{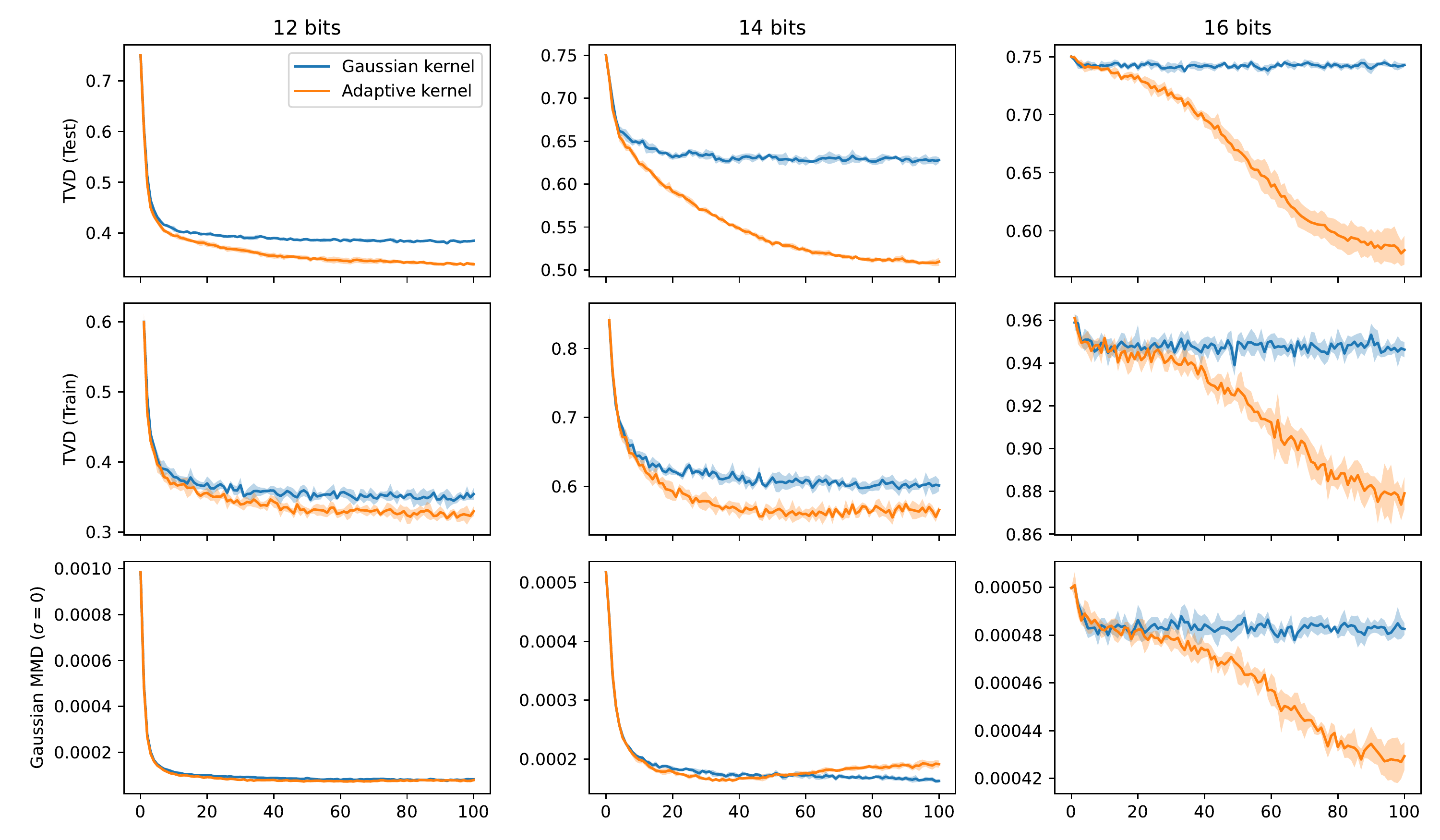}
\caption{Learning curves for (mean $\pm$ standard deviation over $5$ runs) on synthetic parity-check datasets of dimension $12$, $14$ and $16$. The first row shows the total variation distance computed between the true generated distribution and ground truth data distribution (same as Figure 2 in the main text). The second row shows the total variation distance computed between the empirical distribution of a batch of $1000$ generated data and the empirical distribution of the $2000$ training data. The third row shows the empirical MMD with respect to the $0$-bandwidth Gaussian kernel, based on $1000$ generated data and $2000$ training data.}
\label{fg: extented result}
\end{figure*}
Here we describe the detailed settings for the numerical experiments with the synthetic parity check datasets. The parity-check matrices behind the ground truth distributions are \begin{align}
H_{12}=&\begin{bmatrix} 111111110000 \\
000011111111 \end{bmatrix} \\ H_{14}=& \begin{bmatrix} 11111111100000 \\
00000111111111 \end{bmatrix} \\ H_{16}=&\begin{bmatrix} 1111111111000000 \\
0000001111111111 \end{bmatrix}
\end{align} respectively. To prepare the training and testing datasets, bit strings $x$ are uniformly randomly sampled from $\{0,1\}^n$, accepted into the dataset if $Hx=0$, and rejected if $Hx=1$. 

All experiment runs share the same hyperparameters: We use $2000$ training data. To compute the MMD losses during training, we use batches of $1000$ frequencies. For each frequency, we compute the quantum part of the MMD losses (Z expectation values) using $1000$ measurements, and the data part of the MMD losses using all training data. For updating the IQP generator, we use the Adam optimizer with learning rate $0.001$ and $\beta_1 = 0.9, \beta_2=0.999$. For updating the adaptive kernel, we use the Adam optimizer with learning rate $0.0001$ (decay by $10\%$ every $500$ iterations) and $\beta_1 = 0.9, \beta_2=0.999$. For the adaptive training runs, we update the generator and kernel at a rate of $1:1$. 

Now we show how FVSBN, the model we use to implement the parameterized spectral measure, can be initialized as the Gaussian spectral measure and the warm spectral measure. To recover a Gaussian spectral measure with bandwidth $\sigma$, we set $W=0$, and $\forall 1 \leq i \leq n$, $b_i= \log(p-\epsilon) - \log(1-p-\epsilon)$ where $p=(1-\exp(-1/(2\sigma^2)))/2$. Note that for the special case of $\sigma=0$, we simply take $b_i=0$. To prepare a warm spectral measure, $W$ and $b$ need to be set as some specific sparse matrices and vectors. As an example, for the $n=12$ case, we set $b_1 = \ln(2)$, $b_5=\frac{K}{2}$, $b_9=b_{10}=b_{11}=b_{12} = W_{2,1} = W_{3,1} = W_{4,1} = W_{6,5} = W_{7,5} = W_{8,5} = K$, $W_{9,1} = W_{10,1} = W_{11,1} = W_{12,1} = W_{9,5} = W_{10,5} = W_{11,5} = W_{12,5} = -K$ and $W_{5,1} = -\frac{K}{2}$, where we choose $K=10$. ($K$ can be chosen between $0$ and $\infty$. A larger $K$ makes the spectral warmer, i.e., more concentrated on the three frequencies.) 

We show the detailed experimental results in~\cref{fg: extented result}. We observe that for the 12- and 14-bit cases, both Gaussian ($\sigma=0$) and Adaptive ($\sigma=0$) methods converge to a significantly lower total variation distance (TVD) on the training set, while yielding substantially higher TVD on the test set, indicating overfitting. For the 16-bit case, the Gaussian ($\sigma=0$) method shows a flat learning curve since (1) the $2000$ training points are quite sparse in ${0,1}^{16}$ and (2) its spectral measure is uniform, which fails to reflect the difference between the generator and the training set. The Adaptive ($\sigma=0$) method is initialized from the Gaussian and overfits to the training data as it converges to a much lower training TVD. As for the MMD loss, we show the Gaussian MMD ($\sigma=0$) values for these two methods in the bottom row of~\cref{fg: extented result}. Both Gaussian ($\sigma=0$) and Adaptive ($\sigma=0$) converge to near-zero values.

\end{document}